\documentclass[12pt]{amsart}  

\usepackage{bbold}
\usepackage{amscd} 
\usepackage{amsmath,amssymb}
\usepackage{mathtools}
\usepackage{mathrsfs}
\textwidth=165mm \textheight=220mm \hoffset=-20mm \voffset=-5mm

\newtheorem{theorem}{Theorem}[section]
\newtheorem{lemma}[theorem]{Lemma}
 
\theoremstyle{definition}

\newtheorem{remark}[theorem]{Remark} 
\numberwithin{equation}{section}

\def\A{\mathbb A}
\def\B{\mathscr B}
\def\M{\mathbb M}

\def\uno{1}
\def\Sob{\mathfrak H}
\def\H{\mathfrak H}
\def\F{\mathfrak F}

\def\D{\text{\rm dom}}
\def\dom{\text{\rm dom}}
\def\ran{\text{\rm ran}}

\def\G{\mathscr G}
\def\RE{\mathbb R}
\def\CO{{\mathbb C}}

\def\ph*{\phi_\star}

\def\be{\begin{equation}}
\def\ee{\end{equation}}
\def\min{{\rm min}}
\def\max{{\rm max}}

\def\-{{\rm in}}
\def\+{{\rm ex}}

\def\op{\overset{\infty}{\underset{n= 0}{\oplus}}}
\def\x{\mathsf x}

\def\Th{\mathbb\Theta}

\def\G{\mathbb G}

\def\S{\mathfrak D}
\def\Ml{\M}
\def\Thl{\mathbb\Theta}

\newcommand{\closure}[2][3]{%
      {}\mkern#1mu\overline{\mkern-#1mu#2}}
\def\cH{\closure{H}}



\begin{document}

\centerline{{\bf ON THE RESOLVENT OF  H+A${\,}^{\!\!\!*}$+A}}
\vskip10pt
\centerline{Andrea Posilicano}\vskip10pt
\centerline{DiSAT, Sezione di Matematica, Universit\`a dell'Insubria, Como, Italy}
\centerline{\tt posilicano@uninsubria.it}
\vskip20pt\noindent
{\bf Abstract.} We present a much shorter and streamlined proof of an improved version of the results previously 
given in [A. Posilicano: On the Self-Adjointness of $H+A^{*}+A$. {\it Math. Phys. Anal. Geom.} {\bf 23} (2020)] concerning the self-adjoint realizations of formal QFT-like Hamiltonians of the kind $H+A^{*}+A$, where $H$ and $A$ play the role of the free field Hamiltonian and of the annihilation operator respectively. We give explicit representations of the resolvent and of the self-adjointness domain; the consequent Kre\u\i n-type resolvent formula leads to a characterization of these self-adjoint realizations as limit (with respect to convergence in norm resolvent sense)  of cutoff Hamiltonians of the kind $H+A^{*}_{n}+A_{n}-E_{n}$, the bounded operator $E_{n}$ playing the role of a renormalizing counter term. These abstract results apply to various concrete models in Quantum Field Theory. 
\vskip8pt\noindent
{\bf Keywords} Singular perturbations; Self-adjoint operators; Kre\u\i n's resolvent formula; Renormalizable QFT models.
\vskip8pt\noindent
{\bf Mathematics Subject Classification (2010)} 47B25, 47A10, 81Q10, 81T16.

\begin{section}{Introduction}
The Nelson Hamiltonian is the (renormalized) self-adjoint operator corresponding to the quantization of the classical system 
\begin{align*}
\ddot\phi(t)=&(\Delta-\mu^{2})\phi(t)-g\sum_{i=1}^{N}\delta_{q_{i}(t)}\\
m\ddot q_{i}(t)=&-g[\nabla\phi (t)](q_{i}(t))\,,\quad i=1,\dots, N\,.
\end{align*}
It models the interaction in three-dimensional space (with coupling constant $g\in\RE$) between a scalar field with mass $\mu\ge0$ and $N$ point particles with mass $m>0$. \par
The singularity created by the presence of the moving Dirac's delta sources $\delta_{q_{i}(t)}$ prevents the evaluation of $\nabla\phi$ at the particle positions $q_{i}(t)$. Therefore, the classical equations are ill-defined and the system cannot be quantized as it stands;  
one regularizes the interaction by introducing an ultraviolet cutoff $\Lambda$, thus obtaining a well-defined, self-adjoint and  bounded-from-below  Hamiltonian $$H_{\Lambda}:=H+A_{\Lambda}^{*}+A_{\Lambda}\,.
$$ 
Here, $H$ is the positive self-adjoint operator (representing the free Hamiltonian) in the Hilbert space $\F=L^{2}(\RE^{3N})\otimes\F_{b}\simeq L^{2}(\RE^{3N};\F_{b})$ defined by
\be\label{free0}
H:=-\frac{\Delta}{2m}\otimes \uno+\uno\otimes {\rm d}\Gamma_{\! b}\big((-\Delta+\mu^{2})^{1/2}\big)\,,
\ee
where $\F_{b}:=\Gamma_{\! b}(L^{2}(\RE^{3}))\equiv \op L^{2}_{sym}(\RE^{3n})$ denotes the boson Fock space over $L^{2}(\RE^{3})$ and
${\rm d}\Gamma_{\! b}(L)$ denotes the bosonic second quantization of $L$, 
and 
\be\label{ann0}
(A_{\Lambda}\Psi)(\x):=\frac{1}2\,a(v^{\Lambda}_{\x})\Psi(\x)\,,\qquad v^{\Lambda}_{ \x}:=g
\sum_{i=1}^{N}(-\Delta+\mu^{2})^{-1/4}\delta^{\Lambda}_{x_{i}}\,,\quad \x\equiv(x_{1},\dots,x_{N})\,,
\ee
where $a(v)$ denotes the annihilation operator in $\F_{b}$ with test vector $v$, and $\delta_{x_{i}}^{\Lambda}$ is the regularization of Dirac's delta defined through the Fourier transform $\widehat{\delta_{x_{i}}^{\Lambda}}:=\chi_{\Lambda}\widehat{\delta_{x_{i}}}$, $\chi_{\Lambda}$ being the characteristic function of the ball of radius $\Lambda$. Thanks to the ultraviolet cutoff $\Lambda$, $A_{\Lambda}^{*}+A_{\Lambda}$ turns out to be $H$-small; hence,  by the Rellich-Kato theorem, $H_{\Lambda}$ is a well defined,  bounded-from-below, self-adjoint operator.\par 
Computing ${\mathscr E}_{\Lambda}$, the ground state energy  at zero total momentum of $H_{\Lambda}$,  one gets
$$
{\mathscr E}_{\Lambda}=-g^{2}N{\mathcal E}_{\Lambda}+O(g^{4})\,,\qquad {\mathcal E}_{\Lambda}\sim \log\Lambda,\  \Lambda\gg1\,.
$$
In the sixties, Edward Nelson  proved   that  there exists 
a  bounded-from-below  self-adjoint operator $\widehat H$ such that 
\be\label{Nelson}
\lim_{\Lambda\nearrow\infty}e^{-it(H_{\Lambda}-g^{2}N{\mathcal E}_{\Lambda})}\Psi
=e^{-it\widehat H}\Psi
\ee
for any time $t\in\RE$ and any $\Psi\in\F$ (see the seminal paper \cite{N} and its antecedent \cite{N0}). As it is well known, \eqref{Nelson} is equivalent to the strong resolvent convergence of $H_{\Lambda}-g^{2}N{\mathcal E}_{\Lambda}$ to $\widehat H$; however, no result neither regarding  the resolvent nor the self-adjointness domain of $\widehat H$ was known up to very recent times.\par A first result about the form-domain of $\widehat H$ was obtained in \cite{GW2} by Griesemer and  W\"unsch; shortly after,  Lampart and Schmidt (see \cite{LS}, also see \cite{L2}, \cite{L3}, \cite{Sch1}, \cite{Sch2} for successive analogous results concerning related models)  obtained an explicit representation of the self-adjointness domain and of the action of $\widehat H$ on it:
\be\label{DL}
\D(\widehat H)=\{\Psi\in\F:\Psi_{0}:=\Psi-G_{0}\Psi\in\D(H)\}
\ee
and
\be\label{AL}
\widehat H\Psi=\cH\Psi+A^{*}\Psi+\widehat A\Psi\,,\qquad \widehat A\Psi:=A\Psi_{0}-\widehat T\Psi\,.
\ee
Here, $G_{z}:=(AR_{\bar z})^{*}$ with $R_{z}$ the resolvent of $H$ at $z\in\varrho(H)$ and 
the bounded (with respect to the graph norm in $\D(H)$) operator $$A:\D(H)\to\F$$ is defined as in \eqref{ann0} with the Dirac delta $\delta_{x_{i}}$ in place of the regularized $\delta_{x_{i}}^{\Lambda}$, $\cH$ denotes the closure of the bounded operator $H:\D(H)\subseteq\F\to\D(H)^{*}$ 
and $\widehat T:\D(\widehat H)\subseteq\F\to\F$ is an explicit symmetric operator. The operator  $\cH+A^{*}$ turns out to be $\F$-valued whenever restricted to $\D(\widehat H)$ and so $\widehat H$ is $\F$-valued as well. \par
We refer to, e.g., \cite{Arai} for a systematic introduction to the mathematics of quantum fields; however, let us point out that, in our successive abstract setting, we do not need any specific property neither of the Fock space nor of the annihilation operator beside the assumptions that $A$ is $H$-bounded (with respect to a bounded-from-below self-adjoint $H$ in a Hilbert space $\F$) and has a  kernel which is dense in $\F$ (actually, a weaker hypothesis suffices, see Remark \ref{REM} below).\par
The representation \eqref{DL} much resembles the one for the domain of a self-adjoint extension of the symmetric restriction $H|\ker(A)$. In this case, the domain is of the kind 
 $$\{\Psi\in\F:\Psi_{0}:=\Psi-G_{0}\Phi\in\D(H),\ A\Psi_{0}=\Theta\Phi\},$$ 
 for some self-adjoint operator $\Theta:\dom(\Theta)\subseteq\F\to\F$ and the corresponding self-adjoint 
 extension $H_{\Theta}$ is characterized by the  Kre\u\i n-type resolvent formula  $$
 (-H_{\Theta}+z)^{-1}=(-H+z)^{-1}+G_{z}(\Theta+A(G_{0}-G_{z}))^{-1}G_{\bar z}^{*}\,,
 $$ 
see \cite{JFA} and \cite{O&M} (we refer to \cite{O&M} for the connections with classical von Neumann's scheme  
and with boundary triplets theory). However, there is a crucial difference between $\widehat H$ and $H_{\Theta}$: since  it turns out that, as is typical in quantum fields models, the domain of $\widehat H$ has trivial intersection with the domain of the free Hamiltonian, while $\ker(A)$ is dense, one gets
$$\dom(\widehat H)\cap\dom(H)=\{0\}\not=\ker(A)={\dom(H_{\Theta})\cap\dom(H)}\,.
$$
This shows that one cannot hope to obtain $\widehat H$ by a straightforward application of the theory of singular perturbations of self-adjoint operators (a.k.a. self-adjoint extensions of symmetric restrictions). \par In \cite{MPAG}, such an obstruction was circumvented by applying the restriction-extension procedure twice: at first one builds the self-adjoint extensions of $H|\ker(A)$ and then one gets $\widehat H$ as a particular self-adjoint extension of the restriction of a first-step extension  to $\ker{(1-A_{*})}$, where $A_{*}$ is a suitable left inverse of $G_{0}$; a Kre\u\i n-type resolvent formula for $\widehat H$ was then obtained by combining the resolvents of the two extensions.\par Here, we follow a much shorter and easy path: we apply the original scheme provided in \cite{JFA} by simply replacing $A$ with $\A\Psi:=A\Psi\oplus\Psi$, so that $\ker(\A)=\{0\}$. Even if, with such a replacement, the hypothesis (h2) in \cite{JFA} is violated (see Remark \ref{2.5} below), and $\widehat H$ is not representable as a self-adjoint extension of a restriction of $H$ to a dense set, the proof of \cite[Theorem 2.1]{JFA} can be modified to adapt to the present situation. In this way, we easily obtain a resolvent formula for a self-adjoint operator which turns out to have the same domain and action as the one provided in \cite{LS}, i.e., as in \eqref{DL} and \eqref{AL} (see Theorems  \ref{TK} and \ref{real}). Furthermore, we improve the results provided in \cite{MPAG} in that we no longer need to assume that  the range of $A$ is dense (this originates from an issue raised by Sascha Lill). The range is dense for the Nelson model and in the general abstract case this is equivalent to a trivial intersection between the domains of the free Hamiltonian and the interacting one (see Remark \ref{r-dom}).\par
The found resolvent formula, together with a similar one holding for the regularized operators $H_{\Lambda}=H+A_{\Lambda}^{*}+A_{\Lambda}$ (see Lemma \ref{krf-n}), leads  to a 
convergence result in norm resolvent sense as $\Lambda\nearrow\infty$ (see Theorem \ref{teo-conv}), allowing an approximation scheme which fits into the framework of the regularization of quantum fields Hamiltonians by an ultraviolet cutoff.\par Our abstract setting applies to various renormalizable models in Quantum Field Theory; for such results, we refer to \cite[section 3.1]{MPAG}. The main technical point regards the existence of the limit \eqref{ARA*-TS} in Theorem \ref{teo-conv} below (this corresponds to \cite[Theorem 3.10]{MPAG}). Such a not trivial result is provided in \cite{LS} as regards the Nelson model and in \cite{L2}, \cite{L3}, \cite{Sch1}, \cite{Sch2} for other related models; for that, the Fock space structure of $\F$ finally comes into play. 
\end{section}
\subsection{Notation and definitions.} \begin{itemize}
\item $\D(L)$, $\ker(L)$, $\ran(L)$ denote the domain, kernel and range of the linear operator $L$ respectively; 
\item $\varrho(L)$ denotes the resolvent set of $L$;
\item $L|V$ denotes the restriction of $L$ to the subspace $V\subset\D(L)$;
\item $\B(X,Y)$ denotes the set of bounded linear operators on the Banach space $X$ to the Banach space $Y$, $\B(X):=\B(X,X)$; 
\item $\|\cdot\|_{X,Y}$ denotes the norm in $\B(X,Y)$;
\item A linear operator $S$ is said to be $H$-small whenever $\dom(S)\supseteq\dom(H)$ and there exist $ a\in[0,1)$ and $b\in\RE$ such that $\|S\psi\|\le a\,\|H\psi\|+b\,\|\psi\|$ for any $\psi\in\dom(H)$;
\item $S_{n}$ is said to be uniformly $H_{n}$-small whenever each $S_{n}$ is $H_{n}$-small with constants $a\in[0,1)$ and $b\in\RE$ which are $n$-independent.
\end{itemize}
\begin{section}{Building a resolvent}
\noindent Let $$H :\dom(H)\subseteq\F\to\F$$ be a bounded-from-below self-adjoint operator in the Hilbert space $\F$ equipped with the scalar product $\langle\cdot,\cdot\rangle$ and corresponding norm $\|\cdot\|$. We denote by $\Sob_{s}$, $s\in [-1,1]$, the scale of Hilbert spaces given by the completion of $\dom(H)$ endowed with the scalar product
$$
\langle \psi_{1},\psi_{2}\rangle_{s}:=\langle(H^{2}+1)^{s/2}\psi_{1},(H^{2}+1)^{s/2}\psi_{2}\rangle\,.
$$
Obviously, $\H_{0}\equiv\F$ and $\H_{1}\equiv\dom(H)$. One has
\be\label{incl}
\H_{s}\hookrightarrow\F\hookrightarrow\H_{-s}\,,\qquad 0<s\le 1\,, 
\ee
with dense inclusions; moreover, the interpolation theorems hold for the scale $\H_{s}$ (see \cite[Section 9]{KP}). By the dense inclusions \eqref{incl}, the scalar product on $H$ induces the dual pairings (conjugate-linear with respect to the first variable)
$\langle\cdot,\cdot\rangle_{\pm s,\mp s}$ between the dual couples
$(\H_{\mp s},\H_{\pm s})$; these induce the dual pairings $\langle\!\langle\cdot,\cdot\rangle\!\rangle_{\pm s,\mp s}$ between the dual couples
$(\F\oplus\H_{\mp s},\F\oplus\H_{\pm s})$. In the following, adjoints are taken with respect to such dualities.\par  
Let
\be\label{ann}  
 A:{\Sob_1} \to \F\,,
\ee 
be a bounded linear map; for any $z\in  \varrho(H)$ we define the bounded operator 
\be\label{Gz}
G_{z} : \F\to\F \,,\qquad G_{z}:=(  A R_{{\bar z} })^{*}\,,
\ee
where
$$
R_{z}:\F\to{\Sob_1}\,,\qquad R_{z}:=(-H +z )^{-1}\,.
$$
Notice that $R_{z}$  extends to a bounded map $R_{z}\!:\!\H_{-1}\to\F$ with bounded inverse
$$(-\cH+z): \F\to\H_{-1}\,,
$$ where $\cH$ denotes the closure of the densely-defined, bounded operator 
$H\!:\!\H_{1}\subseteq\F\to\H_{-1}$. By introducing the adjoint 
\be\label{A*}
A^{*}:\F\to \H_{-1}\,,
\ee
one has
\be\label{RA*}
G_{z}=R_{z}A^{*}
\ee
and
\be\label{HG}
(-\cH+z)G_{z}=A^{*}\,.
\ee
By the first resolvent identity, one gets
\be\label{RG}
(z-w)R_{w}G_{z}=G_{w}-G_{z}=(z-w)R_{z}G_{w}\,,
\ee
so that 
\be\label{wz}
\ran(G_{w}-G_{z})\subseteq{\Sob_1}\,.
\ee
In the following we take $A\in\B(\H_{1},\F)$ such that 
\be\label{H3}
\ran(G_{z})\cap\H_{1}=\{0\}
\ee
for any $z\in\varrho(H)$. 
\begin{remark}\label{REM}  The inclusion \eqref{wz} implies that \eqref{H3} holds for any $z\in\varrho(H)$ whenever it holds for a single $z_{\circ}\in\varrho(H)$. By \eqref{RA*}, since $R_{z}:\H_{-1}\to\F$ maps $\F$ onto $\H_{1}$, 
$$
\ran(G_{z})\cap\H_{1}=\{0\}\quad\Leftrightarrow\quad\ran(A^{*})\cap\F=\{0\}\,.$$ 
By \cite[Lemma 2.5]{MPAG}, 
$$
\text{$\ker(A)$ is dense in $\F$}\quad\Rightarrow\quad\ran(G_{z})\cap\H_{1}=\{0\}
$$
and, by \cite[Lemma 2.11]{P03}, $\Rightarrow$ can be replaced by $\Leftrightarrow$ whenever $\ran(A)$ is dense in $\F$ then .  
\end{remark}
\begin{remark}\label{closable}
By \cite[Remark 2.11]{MPAG}, \eqref{H3} fails whenever $A\not=0$ is closable as an operator in $\F$ with $\dom(A)=\H_{1}$.   
\end{remark}
Given $A$ as above, we introduce the bounded operators
$$
\A:\H_{1}\to\F\oplus\H_{1}\,,\qquad \A\psi:=A\psi\oplus\psi
$$
and, for any $z\in\varrho(H)$,
$$
\G_{z}:\F\oplus\H_{-1}\to\F\,,\qquad\G_{z}:=(\A R_{\bar z})^{*}\,.
$$
By 
\be\label{GGz}
\G_{z}(\psi\oplus\varphi)=G_{z}\psi+R_{z}\varphi
\ee 
and by \eqref{RG}, one has
\be\label{RGG}
(z-w)R_{w}\G_{z}=\G_{w}-\G_{z}=(z-w)R_{z}\G_{w}\,.
\ee
Now, let us pick $\lambda_{\circ}\in\RE$ such that $$\lambda_{\circ}<\lambda_{\rm inf}:=\inf\sigma (H)\,,$$ and set  
\be\label{WF}
\Ml_{z}:=  \A(\G-\G_{z}): \F\oplus\H_{-1}\to \F\oplus\H_{1}\,,\qquad 
\G:=\G_{\lambda_{\circ}}\,.
\ee 
By  \eqref{RGG}, the linear operator $\Ml_{z}$ is well defined, bounded and 
\be\label{Mz}
\Ml_{z}=(z-\lambda_{\circ})\G^{*}\G_{z}=(z-\lambda_{\circ})\G_{{\bar z} }^{*}\G\,.
\ee
By \eqref{RGG} and \eqref{Mz}, one gets the relations 
\be\label{QFT}
\Ml_{z}-\Ml_{w}=(z-w)\G^{*}_{\bar w}\G_{z}\,,\qquad \Ml_{z}^{*}=\Ml_{{\bar z} }\,.
\ee
Exploiting the definitions of $\A$ and $\G_{z}$, one can re-write $\Ml_{z}$ as a sum of block operator matrices which isolates the $z$-independent component:
\begin{align}\label{ms}
\Ml_{z}
=\begin{bmatrix} 0&G^{*}\\G&R  
\end{bmatrix}+\begin{bmatrix} A(G-G_{z})&-G^{*}_{\bar z}\\
-G_{z}&-R_{z}
\end{bmatrix}\,,\qquad R:=R_{\lambda_{\circ}}\,,\quad G:=G_{\lambda_{\circ}}\,.
\end{align}
Notice that, unless $A\in\B(\H_{1/2},\F)$, it is not possible (by \eqref{H3}) to untangle the bounded (by \eqref{wz}) component $A(G-G_{z})$: the introduction of the ($\lambda_{\circ}$-dependent) operator $G$ has the main purpose of ''renormalize'' the otherwise ill-defined tern $AG_{z}$.
\begin{remark} The semi-boundedness hypothesis of $H$ is not essential; is used here for reasons of simplicity. At the expense of more complex formulae, similar results could also be obtained in the case where $\sigma(H)=\RE$. In that case, $\lambda_{\circ}$ is replaced by the imaginary unit $i$ and the role of $G$ is played by $\frac12(G_{i}+G_{-i})$.  
\end{remark}
\begin{lemma}\label{bv} If $A\in\B(\H_{s},\F)$ for some $s\in(0,1)$, then both $(1-G_{\lambda})$ and $(1-G_{\lambda}^{*})$ have bounded inverses in $\F$ whenever $\lambda< \lambda_{\rm inf}-\max\big\{\sqrt{ \lambda^{2}_{\rm inf}+1},\|A\|^{s_{*}}_{\H_{s},\F}\big\}$, $s_{*}:=\frac1{1-s}$. \end{lemma}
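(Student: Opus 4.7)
The plan is to reduce everything to the single quantitative estimate $\|G_\lambda\|_{\F,\F}<1$: once that is in hand, both $(1-G_\lambda)$ and $(1-G_\lambda^*)$ are bounded-invertible in $\F$ by the Neumann series, since $\|G_\lambda^*\|_{\F,\F}=\|G_\lambda\|_{\F,\F}$. So only one operator norm actually needs bounding.

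For real $\lambda<\lambda_{\rm inf}$, the definition $G_\lambda=(AR_{\bar\lambda})^*=(AR_\lambda)^*$ gives $\|G_\lambda\|_{\F,\F}=\|AR_\lambda\|_{\F,\F}$. Since $R_\lambda\psi\in\H_1\subseteq\H_s$ for $\psi\in\F$, the assumption $A\in\B(\H_s,\F)$ yields
$$
\|AR_\lambda\|_{\F,\F}\le \|A\|_{\H_s,\F}\,\|R_\lambda\|_{\F,\H_s}=\|A\|_{\H_s,\F}\,\|(H^2+1)^{s/2}R_\lambda\|_{\F,\F},
$$
and the last factor is $\sup_{t\in\sigma(H)}(t^2+1)^{s/2}/(t-\lambda)$ by spectral calculus for $H$, with $t\ge \lambda_{\rm inf}$.

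The elementary key step is the uniform comparison $(t^2+1)^{1/2}\le t-\lambda$ for every $t\ge\lambda_{\rm inf}$: squaring, this is equivalent to $\lambda^2-2\lambda t\ge 1$, whose worst case (in $t$, using $\lambda<0$, which will be automatic) is $t=\lambda_{\rm inf}$, giving $(\lambda-\lambda_{\rm inf})^2\ge 1+\lambda_{\rm inf}^2$, i.e.\ the first condition $\lambda_{\rm inf}-\lambda\ge\sqrt{1+\lambda_{\rm inf}^2}$. Raising to the power $s\in(0,1)$ and dividing by $t-\lambda$ then gives
$$
\frac{(t^2+1)^{s/2}}{t-\lambda}\le (t-\lambda)^{-(1-s)},
$$
whose supremum over $t\ge\lambda_{\rm inf}$ is attained at $t=\lambda_{\rm inf}$, producing the clean bound $\|G_\lambda\|_{\F,\F}\le \|A\|_{\H_s,\F}(\lambda_{\rm inf}-\lambda)^{-(1-s)}$.

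Finally, forcing this last quantity to be strictly below $1$ amounts to $\lambda_{\rm inf}-\lambda>\|A\|_{\H_s,\F}^{s_*}$, the second term in the maximum. Imposing both conditions simultaneously is precisely the hypothesis of the lemma, and the Neumann series concludes. There is no real obstacle here beyond keeping the bookkeeping clean: the two entries of the maximum have distinct origins, namely absorbing the bracket factor $(t^2+1)^{s/2}$ uniformly in the spectrum on one hand, and outscaling $\|A\|_{\H_s,\F}$ by the residual geometric factor $(\lambda_{\rm inf}-\lambda)^{-(1-s)}$ on the other.
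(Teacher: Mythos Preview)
Your proof is correct and follows essentially the same approach as the paper: reduce to $\|G_\lambda\|_{\F,\F}<1$ via $\|G_\lambda\|\le\|A\|_{\H_s,\F}\|R_\lambda\|_{\F,\H_s}$ and the bound $\|R_\lambda\|_{\F,\H_s}\le(\lambda_{\rm inf}-\lambda)^{s-1}$ under the first condition in the maximum. The only difference is that the paper obtains this resolvent estimate by interpolating between the endpoints $\|R_\lambda\|_{\F,\F}\le(\lambda_{\rm inf}-\lambda)^{-1}$ and $\|R_\lambda\|_{\F,\H_1}\le 1$, whereas you compute the $\H_s$-norm directly from the spectral supremum; the underlying inequality $(t^2+1)^{1/2}\le t-\lambda$ for $t\ge\lambda_{\rm inf}$ is the same in both arguments.
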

\begin{proof}
Let $\lambda<\lambda_{\rm inf}$. Since 
$$\|R_{\lambda}\|_{\F,\F}\leq ( \lambda_{\rm inf}-\lambda)^{-1}\,,\qquad\|R_{\lambda}\|_{\F,\H_{1}}\leq1\,,\qquad \lambda\le \lambda_{\rm inf}-\sqrt{ \lambda^{2}_{\rm inf}+1}\,,
$$ 
one gets, by interpolation, 
\be\label{interp}
\|R_{\lambda}\|_{\F,\H_{s}}\leq( \lambda_{\rm inf}-\lambda)^{s-1}\,,\qquad 0\le s\le 1\,.
\ee
Therefore, if $A\in\B(\H_{s},\F)$, $s<1$, then one has  $\|G_{\lambda}\|_{\F,\F}=\|G_{\lambda}^{*}\|_{\F,\F}<1$ whenever $\lambda< \lambda_{\rm inf}-\max\big\{\sqrt{ \lambda^{2}_{\rm inf}+1},\|A\|^{s_{*}}_{\H_{s},\F}\big\}$. Hence, for such a $\lambda$, both $(1-G_{\lambda})$ and $(1-G_{\lambda}^{*})$ have bounded inverses.
\end{proof}
\begin{remark}\label{ig} If, in the proof of Lemma \ref{bv}, one replaces \eqref{interp} with 
$$
\|R_{i\gamma}\|_{\F,\H_{s}}\leq |\gamma|^{\,s-1}\,,\qquad \gamma\in\RE\,,\ |\gamma|\ge 1\,,\quad   0\le s\le 1\,,
$$
then one gets that  both $(1-G_{i\gamma})$ and $(1-G_{i\gamma}^{*})$ have bounded inverses in $\F$ whenever $|\gamma|>\max\big\{1,\|A\|^{s_{*}}_{\H_{s},\F}\big\}$.
\end{remark}
Since the first operator matrix in \eqref{ms} is symmetric and $z$-independent, the second one satisfies the same relations \eqref{QFT} as $\Ml_{z}$; moreover, the first operator matrix in \eqref{ms} remains symmetric if its zero entry is replaced by an arbitrary symmetric operator. Furthermore, Lemma \ref{bv}  suggests to replace $G$ and $G^{*}$ with $1-G$ and $1-G^{*}$. Hence, we re-write $\Ml_{z}$ as the sum
$$\Ml_{z}
=\begin{bmatrix} -T&G^{*}-1\\G-1&R  
\end{bmatrix}+\begin{bmatrix} T+A(G-G_{z})&1-G^{*}_{\bar z}\\
1-G_{z}&-R_{z}
\end{bmatrix}.
$$ 
For later convenience,  the symmetric $T$ is chosen of the form
\be\label{TS}
T_{S}:=(1-G^{*})S(1-G)\,,
\ee
where $S$ is symmetric as well; this allows the factorization
$$
\begin{bmatrix} -T_{S}&G^{*}-1\\G-1&R  
\end{bmatrix}=
\begin{bmatrix}1-G^{*}&0\\0&1\end{bmatrix}\begin{bmatrix}{-S}&-\uno\\-\uno&R  \end{bmatrix}\begin{bmatrix}1-G&0\\0&1
\end{bmatrix}\,.
$$
Therefore, following the scheme provided in \cite{JFA}, given any symmetric operator 
$$
S:\D(S)\subseteq\F\to\F\,,\qquad\D(S)\supseteq \H_{1}\,,
$$  
we define the operator $$
\Thl_{S}:{\S}\oplus\F\subseteq\F\oplus\H_{-1}\to\F\oplus\H_{1}
$$
by \be\label{ts}
\S:=\{\psi\in\F: (1-G)\psi\in\H_{1}\},
\qquad
\Thl_{S}:=\begin{bmatrix} T_{S}&1-G^{*}\\1-G&-R  
\end{bmatrix}\,,
\ee
and consider the operator 
$$
\Thl_{S}+\Ml_{z}
:{\S}\oplus\F\subseteq\F\oplus\H_{-1}\to\F\oplus\H_{1}
$$
\be\label{thl}
\Thl_{S}+\Ml_{z}=\begin{bmatrix} T_{S}+A(G-G_{z})&1-G^{*}_{\bar z}\\
1-G_{z}&-R_{z}
\end{bmatrix}\,.
\ee
Since $\Thl_{S}$ is symmetric, i.e., for any $\psi_{1}\oplus\phi_{1}$ and $\psi_{2}\oplus\phi_{2}$ in ${\S}\oplus\F$,
\be\label{symm}
\langle\!\langle (\psi_{1}\oplus\phi_{1}),\Thl_{S}(\psi_{2}\oplus\phi_{2})\rangle\!\rangle_{-1,1}
=\langle\!\langle \Thl_{S}(\psi_{1}\oplus\phi_{1}),(\psi_{2}\oplus\phi_{2})\rangle\!\rangle_{1,-1}\,,
\ee
the operator in \eqref{thl} satisfies, by \eqref{QFT},  the same kind of relations as $\Ml_{z}$:
\be\label{QF1}
(\Thl_{S}+\Ml_{z})-(\Thl_{S}+\Ml_{w})
=(z-w)\G^{*}_{\bar w}\G_{z}\ee
and
\be\label{QF2}
\langle\!\langle (\psi_{1}\oplus\phi_{1}),(\Thl_{S}+\Ml_{z})(\psi_{2}\oplus\phi_{2})\rangle\!\rangle_{-1,1}
=\langle\!\langle (\Thl_{S}+\Ml_{\bar z})(\psi_{1}\oplus\phi_{1}),(\psi_{2}\oplus\phi_{2})\rangle\!\rangle_{1,-1}\,.
\ee
\begin{remark} Obviously, whenever $\S$ is dense, \eqref{symm} re-writes as $\Thl_{S}\subseteq\Thl_{S}^{*}$ and \eqref{QF2} re-writes as $(\Thl_{S}+\Ml_{\bar z})\subseteq (\Thl_{S}+\Ml_{z})^{*}$.
\end{remark}
Finally, we define 
\be\label{ZS}
Z_{S}:=\big\{z\in\varrho(H):\big(\Thl_{S}+\Ml_{z}\big)^{-1}\in\B(\F\oplus\H_{1},\F\oplus\H_{-1})\big\}
\,.
\ee
Notice that, by \eqref{WF}, 
\be\label{iff}
\lambda_{\circ}\in Z_{S}\quad\Leftrightarrow\quad \Thl_{S}^{-1}\in\B(\F\oplus\H_{1},\F\oplus\H_{-1})\,.
\ee
\begin{lemma}\label{supp} Suppose $\lambda_{\circ}\in Z_{S}$. Then $Z_{S}$ is an open subset of $\CO$ symmetric with respect to $\RE$, i.e., $z\in Z_{S}$ if and only if $\bar z\in Z_{S}$; moreover, ${\S}$ is dense  and $\Thl_{S}$ is self-adjoint.
\end{lemma}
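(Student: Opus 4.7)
The plan is to deduce, in order, density of $\S$ in $\F$, self-adjointness of $\Thl_{S}$, symmetry of $Z_{S}$ with respect to $\RE$, and openness of $Z_{S}$, all starting from the single hypothesis $\lambda_{\circ}\in Z_{S}$.

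The starting observation is that by \eqref{iff} the hypothesis gives $\Thl_{S}^{-1}\in\B(\F\oplus\H_{1},\F\oplus\H_{-1})$. Substituting $\psi_{i}\oplus\phi_{i}=\Thl_{S}^{-1}u_{i}$ into the symmetry relation \eqref{symm} yields
$$
\langle\!\langle\Thl_{S}^{-1}u_{1},u_{2}\rangle\!\rangle_{-1,1}=\langle\!\langle u_{1},\Thl_{S}^{-1}u_{2}\rangle\!\rangle_{1,-1}\,,\qquad u_{1},u_{2}\in\F\oplus\H_{1}\,,
$$
i.e., $\Thl_{S}^{-1}$ is self-dual with respect to the pairing between $\F\oplus\H_{-1}$ and $\F\oplus\H_{1}$. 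Now if $\zeta\in\F$ were orthogonal to $\S$, then $(\zeta\oplus 0)\in\F\oplus\H_{1}$ would pair to zero with every element of $\S\oplus\F=\ran(\Thl_{S}^{-1})$; self-duality then gives $\Thl_{S}^{-1}(\zeta\oplus 0)=0$, and injectivity of $\Thl_{S}^{-1}$ forces $\zeta=0$. Hence $\S$ is dense in $\F$.

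With $\dom(\Thl_{S})=\S\oplus\F$ dense, $\Thl_{S}^{*}$ is defined and \eqref{symm} reads $\Thl_{S}\subseteq\Thl_{S}^{*}$. For the reverse inclusion, take $\xi\in\dom(\Thl_{S}^{*})$ and set $\eta:=\Thl_{S}^{*}\xi$; by surjectivity of $\Thl_{S}$ there exists $\tilde\xi\in\S\oplus\F$ with $\Thl_{S}\tilde\xi=\eta$. Computing $\langle\!\langle u,\eta\rangle\!\rangle_{-1,1}$ both via the definition of $\Thl_{S}^{*}$ and via \eqref{symm} gives $\langle\!\langle\Thl_{S}u,\xi-\tilde\xi\rangle\!\rangle_{1,-1}=0$ for every $u\in\S\oplus\F$; since $\Thl_{S}u$ sweeps all of $\F\oplus\H_{1}$, non-degeneracy forces $\xi=\tilde\xi\in\dom(\Thl_{S})$, so $\Thl_{S}^{*}\subseteq\Thl_{S}$.

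For the symmetry of $Z_{S}$, the self-adjointness of $\Thl_{S}$, the boundedness $\Ml_{z}\in\B(\F\oplus\H_{-1},\F\oplus\H_{1})$, and the relation $\Ml_{z}^{*}=\Ml_{\bar z}$ from \eqref{QFT} together give $(\Thl_{S}+\Ml_{z})^{*}=\Thl_{S}+\Ml_{\bar z}$; since bounded invertibility of a closed operator is equivalent to that of its adjoint, $z\in Z_{S}\Leftrightarrow\bar z\in Z_{S}$. Openness at $z_{0}\in Z_{S}$ follows from \eqref{QF1}: $\Ml_{z}-\Ml_{z_{0}}=(z-z_{0})\G_{\bar z_{0}}^{*}\G_{z}$, whose norm is $O(|z-z_{0}|)$ thanks to the $z$-continuity of $\G_{z}$ inherited from \eqref{RGG}; factoring $\Thl_{S}+\Ml_{z}=(\Thl_{S}+\Ml_{z_{0}})\bigl(\uno+(\Thl_{S}+\Ml_{z_{0}})^{-1}(\Ml_{z}-\Ml_{z_{0}})\bigr)$ and invoking the Neumann series gives invertibility in a neighborhood. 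The main hurdle is the density step: one must recognize that bounded invertibility of the symmetric $\Thl_{S}$ upgrades its inverse to a self-dual bounded operator, whose kernel captures exactly orthogonality to $\S$.
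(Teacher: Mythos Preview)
Your proof is correct and follows essentially the same approach as the paper's: both obtain density of $\S$ and self-adjointness of $\Thl_{S}$ from the bounded invertibility and symmetry of $\Thl_{S}^{-1}$, then derive the symmetry of $Z_{S}$ from $(\Thl_{S}+\Ml_{z})^{*}=\Thl_{S}+\Ml_{\bar z}$ and openness from the continuity of $z\mapsto\Ml_{z}$. You simply unpack the terse steps in the paper (\emph{``$\Thl_{S}^{-1}$ bounded and symmetric, hence self-adjoint''} and \emph{``$\Thl_{S}$ self-adjoint as inverse of a self-adjoint operator''}) into explicit computations, and make the openness argument concrete via a Neumann series instead of the bare appeal to continuity.
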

\begin{proof} The set $Z_{S}$ is open in  $\CO$ by the continuity of the map $z\mapsto \Ml_{z}$. By our hypothesis and by \eqref{symm}, $\Thl_{S}^{-1}$ is bounded and symmetric, hence self-adjoint. The same hypothesis also gives 
$$\overline{{\S}\oplus\F}=\overline{\S}\oplus\H_{-1}=\overline{\dom(\Thl_{S})}=\overline{\ran(\Thl_{S}^{-1})}=
\ker(\Thl_{S}^{-1})^{\perp}=\{0\}^{\perp}=\F\oplus\H_{-1}\,,$$ and so ${\S}$ is dense in $\F$; then, the densely-defined $\Thl_{S}$ is self-adjoint as the inverse of a self-adjoint operator. \par
By $\big(\Thl_{S}+\Ml_{z}\big)^{*}=\Thl_{S}+\Ml_{\bar z}$ and and by \cite[Thm. 5.30, Chap. III]{K}, one gets that $z\in Z_{S}$ is equivalent to $\bar z\in Z_{S}$.
\end{proof}
\begin{theorem}\label{TK} Let $H:\H_{1}\subseteq\F\to\F$  bounded-from-below  and let $A\in\B(\H_{1},\F)$ satisfy \eqref{H3}. If $\lambda_{\circ}\in Z_{S}$, then
\be\label{krein}
R_{z}+\G_{z}\big(\Thl_{S}+\Ml_{z}\big)^{-1}\G_{\bar z}^{*}\,,\qquad z\in Z_{S}\,,
\ee
is the resolvent of a self-adjoint operator. 
\end{theorem}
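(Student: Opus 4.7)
My plan is to verify that, for $z \in Z_S$, the bounded operator $R_z^\Theta := R_z + \G_z(\Thl_S + \Ml_z)^{-1}\G_{\bar z}^*$ is the resolvent of a densely defined self-adjoint operator by establishing the three standard criteria: (i) the pseudo-resolvent identity $R_z^\Theta - R_w^\Theta = -(z-w)\,R_z^\Theta R_w^\Theta$ for $z,w \in Z_S$; (ii) the symmetry $(R_z^\Theta)^* = R_{\bar z}^\Theta$ as Hilbert adjoints in $\F$; and (iii) the injectivity $\ker R_z^\Theta = \{0\}$ for some (equivalently every) $z \in Z_S$. Together these characterize the resolvent family of a closed, densely defined, self-adjoint operator (cf.\ \cite{K}).

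\textbf{Steps (i) and (ii).} For (i), set $W_z := \G_z(\Thl_S+\Ml_z)^{-1}\G_{\bar z}^*$ and expand $(z-w)R_z^\Theta R_w^\Theta$ into four products. Applying \eqref{RGG} in the form $(z-w)R_z\G_w = \G_w - \G_z$, together with its dualized counterpart $(z-w)\G_{\bar z}^*R_w = \G_{\bar w}^* - \G_{\bar z}^*$ (obtained by adjoining \eqref{RGG} and using the commutativity of the $R_z$'s), rewrites the two mixed terms; applying \eqref{QFT} in the form $(z-w)\G_{\bar z}^*\G_w = (\Thl_S+\Ml_z)-(\Thl_S+\Ml_w)$ converts $(z-w)W_zW_w$ into a telescoping difference. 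All auxiliary terms cancel pairwise, leaving $W_w - W_z$; combining with the ordinary identity $(z-w)R_zR_w = R_w - R_z$ yields $R_w^\Theta - R_z^\Theta$, which is (i). For (ii), Lemma \ref{supp} yields the self-adjointness of $\Thl_S$, so \eqref{QF2} gives $((\Thl_S+\Ml_z)^{-1})^* = (\Thl_S+\Ml_{\bar z})^{-1}$; combined with $R_z^* = R_{\bar z}$, $\G_z^* = \A R_{\bar z}$ and $\G_{\bar z}^* = \A R_z$, the Hilbert adjoint of $R_z^\Theta$ computes term-wise to $R_{\bar z}^\Theta$.

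\textbf{Step (iii) and the main obstacle.} Suppose $R_z^\Theta\phi = 0$ and set $\xi := (\Thl_S+\Ml_z)^{-1}\G_{\bar z}^*\phi = \psi \oplus \varphi \in \S \oplus \F$. Using \eqref{GGz} to write $\G_z\xi = G_z\psi + R_z\varphi$ and applying $(z - \cH)\colon \F\to\H_{-1}$ to the equation $R_z\phi + \G_z\xi = 0$, the identity \eqref{HG} yields $\phi + A^*\psi + \varphi = 0$ in $\H_{-1}$. The second row of the block equation $(\Thl_S+\Ml_z)\xi = \G_{\bar z}^*\phi = AR_z\phi \oplus R_z\phi$ (see \eqref{thl}) reads $\psi - G_z\psi - R_z\varphi = R_z\phi$; substituting $\varphi + \phi = -A^*\psi$ and using \eqref{RA*} in the form $R_zA^*\psi = G_z\psi$ collapses it to $\psi = 0$, whence $\varphi = -\phi$. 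The first row then reduces to $-\phi + G_{\bar z}^*\phi = AR_z\phi$, and since $G_{\bar z}^* = AR_z$ on $\F$ this forces $\phi = 0$. Step (iii) is the main obstacle: the inversion $(\Thl_S+\Ml_z)^{-1}$ is only formally given and injectivity must be extracted by chasing the block structure; crucially, the enlargement $\A\psi = A\psi \oplus \psi$ (absent from \cite{JFA}) supplies the identity row that, combined with the relation $G_z = R_zA^*$, eliminates the unknown $\psi$ without requiring density of $\ker(A)$.
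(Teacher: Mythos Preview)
Your proof is correct and follows the same three-step scheme as the paper (pseudo-resolvent identity, adjoint symmetry, trivial kernel), but your injectivity argument differs in a way worth noting. The paper works at the distinguished point $z=\lambda_\circ$ and invokes hypothesis \eqref{H3} to split the kernel equation $R(\psi_\circ+\varphi_\circ)+G\phi_\circ=0$ into $G\phi_\circ=0$ and $\psi_\circ+\varphi_\circ=0$, and only then reads off $\phi_\circ=0$, $\psi_\circ=0$ from the block structure of $\Thl_S$. You instead work at a generic $z\in Z_S$ and combine the kernel equation $R_z(\phi+\varphi)+G_z\psi=0$ directly with the second block row $(1-G_z)\psi-R_z\varphi=R_z\phi$ of \eqref{thl}; adding the two yields $\psi=0$ immediately, without any appeal to \eqref{H3}. (Your passage through $\H_{-1}$ via $(z-\cH)$ and $A^*$ is harmless but unnecessary: the two $\F$-valued equations already sum to $\psi=0$.) What your route buys is a small sharpening---hypothesis \eqref{H3} plays no role in the injectivity step---while the paper's choice $z=\lambda_\circ$ has the cosmetic advantage that $\Ml_{\lambda_\circ}=0$, so one manipulates the simpler matrix $\Thl_S$ rather than $\Thl_S+\Ml_z$.
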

\begin{proof} Let us set
$$
{\mathbb\Lambda}_{z}:=\big(\Thl_{S}+\Ml_{z}\big)^{-1}\,,\qquad
\widehat R_{z}:=R_{z}+\G_{z}{\mathbb\Lambda}_{z}\G_{\bar z}^{*}\,,\quad z\in Z_{S}
\,.
$$
By \eqref{QF1}, \eqref{QF2} and by the same simple calculations as in \cite[page 115]{JFA} (with $\tau=\A$, $\mathscr X=\F\oplus\H_{1}$ and $\Lambda(z)={\mathbb\Lambda}_{z}$),
$\widehat R_{z}$ is a pseudo-resolvent, i.e., it satisfies $\widehat R_{z}-\widehat R_{w}=(w-z)\widehat R_{z}\widehat R_{w}$. Thus, by \cite[Thm. 4.10]{St}$, \widehat R_{z}$ is the resolvent of a closed operator whenever $\ker(\widehat R_{z_{\circ}})=\{0\} $ for some $z_{\circ}\in Z_{S}$. Let us take $z_{\circ}=\lambda_{\circ}$,  $\psi_{\circ}\in\ker(\widehat R_{\lambda_{\circ}}  )$ and define 
$\phi_{\circ}\oplus\varphi_{\circ}:={\mathbb\Lambda}_{\lambda_{\circ}}\G^{*}\psi_{\circ}=\Thl_{S}^{-1}\G^{*}\psi_{\circ}$. Then, by \eqref{GGz} and by $R\psi_{\circ}+\G(\phi_{\circ}\oplus\varphi_{\circ})=0$, one gets 
$$
R  (\psi_{\circ}+\varphi_{\circ})+G\phi_{\circ}=0\,.
$$
By \eqref{H3}, this entails $G\phi_{\circ}=0$ and $\psi_{\circ}+\varphi_{\circ}=0$. Thus, by  \eqref{ts},
\begin{align*}
G^{*}\psi_{\circ}\oplus R  \psi_{\circ}=&\G^{*}\psi_{\circ}=\Thl_{S}(\phi_{\circ}\oplus\varphi_{\circ})=
\Thl_{S}(\phi_{\circ}\oplus(-\psi_{\circ}))\\
=&(T_{S}\phi_{\circ}+G^{*}\psi_{\circ}-\psi_{\circ})\oplus (\phi_{\circ}+R  \psi_{\circ})\,.
\end{align*}
This gives $\phi_{\circ}=0$ and $\psi_{\circ}=0$; hence $\ker(\widehat R_{\lambda_{\circ}})=\{0\}$. Finally, since, by \eqref{QF2}, there holds ${\mathbb\Lambda}_{z}^{*}={\mathbb\Lambda}_{\bar z}$, one gets $\widehat R_{z}^{*}=\widehat R_{\bar z}$ and so the closed operator corresponding to $\widehat R_{z}$ is densely defined and self-adjoint. \end{proof}
\begin{remark}\label{2.5} While Theorem \ref{TK} would seem to be a corollary of \cite[Thm. 2.1]{JFA}, let us notice that the hypothesis (h2) there, which, by \cite[Rem. 2.8]{JFA}, is equivalent to 
$\ran(\G_{z})\cap\H_{1}=\{0\}$, is here violated. Indeed, by \eqref{GGz}, one has 
$\ran(\G_{z})\supseteq\H_{1}$. If hypothesis (h2) in \cite[Thm. 2.1]{JFA} were true, then, by the aforementioned equivalence, the relation $R_{z}\psi=-\G_{z}{\mathbb\Lambda}_{z}\G_{\bar z}^{*}\psi$, holding for any $\psi\in\ker(\widehat R_{z})$, would immediately give $\psi=0$,
\end{remark}
The proof of Theorem \ref{TK} strongly relies on the assumption $\lambda_{\circ}\in Z_{S}$, which seems not so easy to check. However, one has a simple criterion in terms of the $H$-smallness of $S$. 
\begin{lemma}\label{KR} Suppose that $A\in\B(\H_{s},\F)$ for some $s\in(0,1)$ and that
the symmetric operator $S$ is $H$-small, $$\|S\psi\|\le a\,\|H\psi\|+b\,\|\psi\|\,,\qquad a\in[0,1)\,,\quad b\in\RE\,.
$$ Then  
$$
\lambda_{\circ}< \lambda_{\rm inf}-\max\left\{\sqrt{ \lambda^{2}_{\rm inf}+1}\,,\|A\|_{\H_{s}\,,\F}^{s_{*}}\,,\frac{b}{1-a}\right\}\quad\Rightarrow\quad \lambda_{\circ}\in Z_{S}\,.
$$ 
\end{lemma}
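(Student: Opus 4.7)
The plan is to reduce the condition $\lambda_{\circ}\in Z_{S}$ to the invertibility of the central block
$$
M:=\begin{bmatrix}S&1\\ 1&-R\end{bmatrix}:\H_{1}\oplus\F\to\F\oplus\H_{1},
$$
exploiting the block factorization of $\Thl_{S}$ displayed just after \eqref{TS}. By \eqref{iff}, $\lambda_{\circ}\in Z_{S}$ is equivalent to $\Thl_{S}^{-1}\in\B(\F\oplus\H_{1},\F\oplus\H_{-1})$; under the hypothesis, Lemma \ref{bv} supplies bounded inverses on $\F$ of the outer diagonal factors $1-G$ and $1-G^{*}$ (this accounts for the first two quantities in the maximum), so the only remaining task is to produce $M^{-1}\in\B(\F\oplus\H_{1},\H_{1}\oplus\F)$.

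To invert $M$ I would proceed by Schur-complement elimination on the system $S\xi+\psi=\eta$, $\xi-R\psi=\zeta$: the first row gives $\psi=\eta-S\xi$, and substitution into the second yields $(1+RS)\xi=R\eta+\zeta$. The whole analytic heart is therefore to prove that $1+SR\in\B(\F)$ is boundedly invertible, from which the Woodbury identity $(1+RS)^{-1}=I-R(1+SR)^{-1}S$ produces the inverse of $1+RS$ on $\H_{1}$. For this I would combine the $H$-smallness of $S$ with the elementary bound $\|R\|_{\F,\H_{1}}\le 1$ already underlying Lemma \ref{bv} (valid under $\lambda_{\circ}\le\lambda_{\rm inf}-\sqrt{\lambda_{\rm inf}^{2}+1}$, and in particular forcing $\|HR\|_{\F,\F}\le 1$), to obtain
$$
\|SR\|_{\F,\F}\le a\,\|HR\|_{\F,\F}+b\,\|R\|_{\F,\F}\le a+\frac{b}{\lambda_{\rm inf}-\lambda_{\circ}};
$$
the third threshold $b/(1-a)$ in the maximum is precisely what forces this quantity below $1$, so that $1+SR$ is Neumann-invertible.

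With $M^{-1}\in\B(\F\oplus\H_{1},\H_{1}\oplus\F)$ in hand, composing with the outer diagonal inverses yields $\Thl_{S}^{-1}\in\B(\F\oplus\H_{1},{\S}\oplus\F)$, and the continuous inclusion ${\S}\oplus\F\subseteq\F\oplus\F\hookrightarrow\F\oplus\H_{-1}$ converts this into the condition $\lambda_{\circ}\in Z_{S}$. The main obstacle I anticipate is the careful bookkeeping of the various domain/codomain spaces through the block-matrix factorization; the three quantities appearing in the maximum then correspond one-to-one with three smallness requirements --- two on the outer factors via Lemma \ref{bv}, and one on the Neumann estimate for $SR$.
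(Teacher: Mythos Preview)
Your proposal is correct and follows essentially the same route as the paper: reduce $\lambda_{\circ}\in Z_{S}$ to the bounded invertibility of $\Thl_{S}$ via \eqref{iff}, peel off the outer diagonal factors using Lemma~\ref{bv}, and invert the central block $\mathbb S=\begin{bmatrix}S&1\\1&-R\end{bmatrix}$ by a Schur-complement argument resting on $\|SR\|_{\F,\F}\le a+\frac{b}{\lambda_{\rm inf}-\lambda_{\circ}}<1$. The only cosmetic difference is that you justify $\|HR\|_{\F,\F}\le 1$ directly from $\|R\|_{\F,\H_{1}}\le 1$, whereas the paper records the equivalent side condition $\lambda_{\circ}\le\min\{0,2\lambda_{\rm inf}\}$ (which is in any case implied by the hypothesis).
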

\begin{proof} By Lemma \ref{bv} , $1-G$ and $1-G^{*}$ have bounded inverses whenever $$\lambda_{\circ}< \lambda_{\rm inf}-\max\big\{\sqrt{ \lambda^{2}_{\rm inf}+1}\,,\|A\|_{\H_{s}\,,\F}^{s_{*}}\big\}\,.$$ 
Hence,  by the factorization
\be\label{GSG}
\Thl_{S}=
\begin{bmatrix}1-G^{*}&0\\0&1\end{bmatrix}{\mathbb S}
\begin{bmatrix}1-G&0\\0&1
\end{bmatrix}\,,\qquad {\mathbb S}:=\begin{bmatrix}{S}&\uno\\\uno&-R  \end{bmatrix}\,,
\ee
$\Thl_{S}$ has a bounded inverse, equivalently $\lambda_{\circ}\in Z_{S}$, whenever ${\mathbb S}^{-1}\in\B(\F\oplus\H_{1},\F\oplus\H_{-1})$. Since $S$ is $H$-small,
\be\label{res-est}
\|SR\|_{\F,\F}\le a+\frac{b}{ \lambda_{\rm inf}-\lambda_{\circ}}\,,
\qquad \lambda_{\circ}\le\min\{0,2\lambda_{\rm inf}\}\,.
\ee
The second Schur complement of ${\mathbb S}$ is given by $S+R  ^{-1}=-(H-S)+\lambda_{\circ}$; by \eqref{res-est}, it has the bounded inverse $$R_{S}:=(-(H-S)+\lambda_{\circ})^{-1}=R  (1+SR  )^{-1}=(1+R  S)^{-1}R  \in\B(\F,\H_{1})\,,\qquad \lambda_{\circ}< \lambda_{\rm inf}-\frac{b}{1-a}\,.
$$  
Thus,
\begin{align}\label{S-1}
{\mathbb S}^{-1}=&
\begin{bmatrix}
R_{S}&R_{S}(-H+\lambda_{\circ})\\
(-H+\lambda_{\circ})R_{S}&
(-H+\lambda_{\circ})(R_{S}-R  )(-H+\lambda_{\circ})\end{bmatrix}
\in \B(\F\oplus\H_{1},\H_{1}\oplus\F)\,.
\end{align}
\end{proof}
\begin{remark}\label{S=0} In the case $S=0$, one has 
$$
{\mathbb S}^{-1}=
\begin{bmatrix}
R&1\\1&0
\end{bmatrix}\in \B(\F\oplus\F)
$$
and so $\Thl_{0}^{-1}\in\B(\F\oplus\F)$.
\end{remark}
\begin{remark}\label{sub} Suppose that the $H$-small symmetric operator  $S$ in Lemma \ref{KR} depends on $\lambda_{\circ}$ in such a way that $a<1$ uniformly w.r.t. $\lambda_{\circ}$ and $b$ grows at most sub-linearly as $|\lambda_{\circ}|\nearrow\infty$. Then, by the same proof as above (in particular, see \eqref{res-est}), still   
$\lambda_{\circ}\in Z_{S}$, whenever $\lambda_{\circ}$ is sufficiently below  $ \lambda_{\rm inf}$. 
\end{remark}
\end{section}
\begin{section}{Self-adjoint realizations}
In this section we give an explicit representation of the self-adjoint operator provided in Theorem \ref{TK}. Here, we do not anymore need the hypothesis $\lambda_{\circ}\in Z_{S}$ and such a  representation holds for an arbitrary $\lambda_{\circ}< \lambda_{\rm inf}=\inf\sigma(H)$; in the sense specified by the Remark \ref{indep} below, the operator is independent of the choice of $\lambda_{\circ}$.
\begin{theorem}\label{real} Let  $H:\H_{1}\subseteq\F\to\F$ be self-adjoint and  bounded-from-below and let the symmetric operator $S$ be $H$-small; let $A:\H_{1}\to\F$ belong to $\B(\H_{s},\F)$ for some $s\in(0,1)$ and suppose that \eqref{H3} holds true. Then 
$$
H_{S}:\S\subseteq\F\to\F\,,\qquad H_{S}:=\cH+A^{*}+A_{S}\,,
$$
$$
\S =\{\psi\in\F:\psi_{\circ}:=\psi-G\psi\in\H_{1}\}\,,
$$
$$
A_{S}:\S \subseteq\F\to\F\,,\qquad A_{S}\psi
:=A\psi_{\circ}-T_{S}\psi\,,\quad T_{S}:=(1-G^{*})S(1-G)\,,
$$
is a bounded-from-below self-adjoint operator with resolvent
\be\label{krf}
(-H_{S}+z)^{-1}=R_{z}+\G_{z}\big(\Thl_{S}+\Ml_{z}\big)^{-1}\G_{\bar z}^{*}\,,\qquad z\in \varrho(H)\cap\varrho(H_{S})\,.
\ee
\end{theorem}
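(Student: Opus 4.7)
The plan is to apply Theorem \ref{TK} to produce a self-adjoint operator $\widehat H_S$ via the Kre\u\i n-type resolvent formula \eqref{krein}, and then identify $\widehat H_S$ with the explicit operator $H_S=\cH+A^{*}+A_{S}$ on $\S$. Under the hypotheses on $A$ and $S$, Lemma \ref{KR} together with Remark \ref{sub} gives $\lambda_{\circ}\in Z_{S}$ for every sufficiently negative $\lambda_{\circ}<\lambda_{\rm inf}$, and in fact $(-\infty,-M]\subseteq Z_{S}$ for some $M>0$. Fix any such $\lambda_{\circ}$; Theorem \ref{TK} then yields a self-adjoint $\widehat H_{S}$ with $\widehat R_{z}:=(-\widehat H_{S}+z)^{-1}=R_{z}+\G_{z}(\Thl_{S}+\Ml_{z})^{-1}\G_{\bar z}^{*}$ for $z\in Z_{S}$, and since $\lambda_{\circ}$ can be taken arbitrarily negative, $(-\infty,-M]\subseteq\varrho(\widehat H_{S})$, so $\widehat H_{S}$ is bounded from below.

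To identify the domain, take $\psi\in\dom(\widehat H_{S})$ and write $\psi=\widehat R_{\lambda_{\circ}}\varphi$ for some $\varphi\in\F$. Since $\Ml_{\lambda_{\circ}}=0$ by \eqref{WF}, setting $\phi\oplus\varphi':=\Thl_{S}^{-1}\G^{*}\varphi\in\S\oplus\F$ and using \eqref{GGz} gives $\psi=R(\varphi+\varphi')+G\phi$. The block form \eqref{ts} rewrites $\Thl_{S}(\phi\oplus\varphi')=\G^{*}\varphi$ as
\begin{align*}
T_{S}\phi+(1-G^{*})\varphi' &= G^{*}\varphi,\\
(1-G)\phi-R\varphi' &= R\varphi.
\end{align*}
The second equation yields $(1-G)\phi=R(\varphi+\varphi')$; substituting back, $\psi=(1-G)\phi+G\phi=\phi$. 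Hence $\psi=\phi\in\S$ and $\psi_{\circ}:=(1-G)\psi\in\H_{1}$, which proves $\dom(\widehat H_{S})\subseteq\S$.

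For the action, start from $-\widehat H_{S}\psi+\lambda_{\circ}\psi=\varphi$. Since $\psi_{\circ}\in\H_{1}$, the second Schur equation gives $\varphi=-\varphi'+(-H+\lambda_{\circ})\psi_{\circ}$ via $R^{-1}=-H+\lambda_{\circ}$ on $\H_{1}$. Plugging this into the first Schur equation and using $G^{*}=AR$ (so that $G^{*}(-H+\lambda_{\circ})\psi_{\circ}=A\psi_{\circ}$) produces $\varphi'=A\psi_{\circ}-T_{S}\psi=A_{S}\psi$. Therefore
$$\widehat H_{S}\psi=\lambda_{\circ}\psi-\varphi=\lambda_{\circ}\psi+\varphi'-(-H+\lambda_{\circ})\psi_{\circ}=H\psi_{\circ}+\lambda_{\circ}G\psi+A_{S}\psi,$$
whereas the identity $\cH G=\lambda_{\circ}G-A^{*}$ (read off \eqref{HG}) gives $\cH\psi+A^{*}\psi=\cH\psi_{\circ}+\cH G\psi+A^{*}\psi=H\psi_{\circ}+\lambda_{\circ}G\psi$, so $\widehat H_{S}\psi=\cH\psi+A^{*}\psi+A_{S}\psi=H_{S}\psi$.

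The reverse inclusion $\S\subseteq\dom(\widehat H_{S})$ is obtained by running the construction backwards: for $\psi\in\S$, set $\phi:=\psi$, $\varphi':=A_{S}\psi$, and $\varphi:=(-H+\lambda_{\circ})\psi_{\circ}-A_{S}\psi$, and directly verify $\Thl_{S}(\phi\oplus\varphi')=\G^{*}\varphi$ and $\widehat R_{\lambda_{\circ}}\varphi=\psi$. Finally, \eqref{krf} extends from $Z_{S}$ to every $z\in\varrho(H)\cap\varrho(H_{S})$ by analyticity of both sides as functions of $z$. The most delicate part is the bookkeeping across the scale $\H_{-1}\hookrightarrow\F\hookrightarrow\H_{1}$: one must ensure that the identities $G^{*}=AR$, $R^{-1}=-H+\lambda_{\circ}$ and $\cH G=\lambda_{\circ}G-A^{*}$ are applied only on vectors lying in the subspace where the relevant extensions act in the strong sense, so that the final output $\cH\psi+A^{*}\psi+A_{S}\psi$ lands in $\F$ rather than merely in $\H_{-1}$.
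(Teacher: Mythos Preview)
Your identification of the domain and action of $\widehat H_{S}$ via the block equations for $\Thl_{S}$ is correct and is essentially the paper's own computation. However, two genuine gaps remain.

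First, the claim $(-\infty,-M]\subseteq Z_{S}$ is not justified by Lemma~\ref{KR} and Remark~\ref{sub}. Lemma~\ref{KR} asserts only that $\lambda_{\circ}\in Z_{S}$ for the set $Z_{S}$ built \emph{with that particular} $\lambda_{\circ}$: both $\Thl_{S}$ and $\Ml_{z}$ depend on the base-point through $G=G_{\lambda_{\circ}}$ and $R=R_{\lambda_{\circ}}$, so varying $\lambda_{\circ}$ varies $Z_{S}$ itself. Remark~\ref{sub} concerns an $S$ that depends on $\lambda_{\circ}$, which is not your situation. Consequently, your deduction ``since $\lambda_{\circ}$ can be taken arbitrarily negative, $(-\infty,-M]\subseteq\varrho(\widehat H_{S})$'' fails: different choices of $\lambda_{\circ}$ produce a~priori different operators $\widehat H_{S}$ and, after your identification, different operators $H_{S}$, since $G$, $T_{S}$ and $A_{S}$ all depend on $\lambda_{\circ}$. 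The paper closes this gap by introducing, for a second parameter $\lambda$ far below $\lambda_{\rm inf}$, the modified operator
\[
\widetilde S=(1-G_{\lambda}^{*})^{-1}\big(T_{S}+A(G-G_{\lambda})\big)(1-G_{\lambda})^{-1},
\]
checking that $\widetilde S-S\in\B(\F)$ with norm growing sub-linearly in $|\lambda|$ (this is where Remark~\ref{sub} is actually used), and proving both the operator identity $\widetilde H_{\widetilde S}=H_{S}$ and the equality $\widetilde\Thl_{\widetilde S}+\widetilde\Ml_{z}=\Thl_{S}+\Ml_{z}$. Only then does one obtain $\lambda\in Z_{S}$ for all sufficiently negative $\lambda$, hence semi-boundedness of $H_{S}$. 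The same device also establishes the theorem for an \emph{arbitrary} $\lambda_{\circ}<\lambda_{\rm inf}$, as stated in the paper, whereas your argument only covers $\lambda_{\circ}$ far enough to the left.

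Second, extending \eqref{krf} from $Z_{S}$ to all of $\varrho(H)\cap\varrho(H_{S})$ ``by analyticity of both sides'' is not a proof: analyticity of $(-H_{S}+z)^{-1}$ on $\varrho(H_{S})$ does not by itself force $(\Thl_{S}+\Ml_{z})^{-1}$ to exist outside $Z_{S}$. The paper obtains the equality $Z_{S}=\varrho(H)\cap\varrho(H_{S})$ by invoking \cite[Theorem~2.19 and Remark~2.20]{CFP}.
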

\begin{proof} At first, let us suppose that $\lambda_{\circ}$ is sufficiently below  $ \lambda_{\rm inf}$  so that, by Lemma \ref{KR} and \eqref{iff}, $\Thl_{S}$ has a bounded inverse. Then, by \eqref{krein} with $z=\lambda_{\circ}$, the self-adjoint operator provided in Theorem \ref{TK}, here denoted by $H_{S}^{\circ}$, has domain 
\be\label{dHS}
\dom( H^{\circ}_{S})=\{\psi\in\F:\psi=\phi+\G\Th_{S}^{-1}\A\phi,\ \phi\in\H_{1}\}\,.
\ee
Writing $\Th_{S}^{-1}\A\phi=\varphi_{0}\oplus\varphi_{1}\in \S\oplus\F$, so that $\A\phi=\Th_{S}(\varphi_{0}\oplus\varphi_{1})$, by the definitions of $\Th_{S}$ and $\A$, one gets
\be\label{T-inv}
(1-G^{*})S(1-G)\varphi_{0}+(1-G^{*})\varphi_{1}=A\phi\,,\qquad (1-G)\varphi_{0}-R\varphi_{1}=\phi\,.
\ee
Therefore, if $\psi\in\dom(H_{S})$, then 
$$
\psi=\phi+G\varphi_{0}+R\varphi_{1}=\varphi_{0}=\phi+R\varphi_{1}+G\psi
$$
and so 
\be\label{ppvp}
\psi-G\psi=\phi+R\varphi_{1}\in\H_{1}\,.
\ee
This gives $\dom( H^{\circ}_{S})\subseteq \S $. Conversely, let $\psi\in\F$ be such that $\psi_{\circ}:=\psi-G\psi\in\H_{1}$.  Thus, for any given $\varphi_{1}\in\F$, one has 
$$
\psi=\phi+\G(\psi\oplus\varphi_{1})\,,\qquad\phi:=\psi_{\circ}-R\varphi_{1}\in\H_{1}\,. 
$$
Since $\psi\oplus\varphi_{1}\in\S\oplus\F=\dom(\Th_{S})$, there exists an unique  $\phi_{0}\oplus\phi_{1}\in\F\oplus\H_{1}$ such that $\Th_{S}(\psi\oplus\varphi_{1})=\phi_{0}\oplus\phi_{1}$. Thus, we need to show that there exists $\varphi_{1}\in\F$ such that $\Th_{S}(\psi\oplus\varphi_{1})=\A\phi$. Equivalently, see \eqref{T-inv}, $\varphi_{1}$ has to solve both the equations
$$
(1-G^{*})S(1-G)\psi+(1-G^{*})\varphi_{1}=A\phi\,,\qquad (1-G)\psi-R\varphi_{1}=\phi\,.
$$
However, by the definition of $\psi_{\circ}$ and $\phi$, the second equation reduces to an identity and so it suffices to solve the first one, which rewrites as
$$
\varphi_{1}=(1-G^{*})^{-1}(A\psi_{\circ}-G^{*}\varphi_{1})-S\psi_{\circ}\,.
$$
Such an equation has solution 
\be\label{sol}
\varphi_{1}=\big(A-(1-G^{*})S\big)\psi_{\circ}\in\F\,.
\ee
This gives $\S \subseteq \dom( H^{\circ}_{S})$ and so $\dom( H^{\circ}_{S})=\S$.\par By the domain representation \eqref{dHS} and by \eqref{krein}, one gets
$$
(-H^{\circ}_{S}+\lambda_{\circ})\psi=(-H+\lambda_{\circ})\phi\,.
$$
Therefore, by \eqref{ppvp} and \eqref{sol},
$$
(-H^{\circ}_{S}+\lambda_{\circ})\psi=(-\cH+\lambda_{\circ})\psi-A^{*}\psi-\varphi_{1}=
(-\cH+\lambda_{\circ})\psi-A^{*}\psi-A\psi_{\circ}+T_{S}\psi\,.
$$
Hence,
$$
H^{\circ}_{S}=\cH+A^{*}+A_{S}
$$
and so $H_{S}=H^{\circ}_{S}$ is self-adjoint.\par
Let us now suppose that $\lambda_{\circ}$ is any point below $ \lambda_{\rm inf}$ and let us define the symmetric operator 
\be\label{wts}
\widetilde S:=(1-G_{\lambda}^{*})^{-1}\big(A( G -G_{\lambda})+T_{S}\big)(1- G_{\lambda} )^{-1}\,,
\ee
where $\lambda $ is sufficiently below  $ \lambda_{\rm inf}$  so that $(1- G_{\lambda} )$ and $(1-G_{\lambda}^{*})$ have bounded inverses (see Lemma \ref{bv} ). 
By \eqref{RG}, one gets
\begin{align*}
\widetilde S=&(1-G_{\lambda}^{*})^{-1}A( G -G_{\lambda})(1- G_{\lambda} )^{-1}\\
&+(1-G_{\lambda}^{*})^{-1}(1-G_{\lambda}^{*}+(G_{\lambda}^{*}-G^{*}))S(1- G_{\lambda} +( G_{\lambda} -G))(1- G_{\lambda} )^{-1}\\
=&(\lambda-\lambda_{\circ} )(1-G_{\lambda}^{*})^{-1}G_{\lambda}^{*}G(1- G_{\lambda} )^{-1}\\
&+\big(1-(\lambda-\lambda_{\circ} )(1-G_{\lambda}^{*})^{-1}G_{\lambda}^{*}R\big)S
\big(1-(\lambda-\lambda_{\circ} )R G_{\lambda} (1- G_{\lambda} )^{-1}\big)\\
=&S+(\lambda-\lambda_{\circ} )(1-G_{\lambda}^{*})^{-1}G_{\lambda}^{*}G(1- G_{\lambda} )^{-1}\\
&-(\lambda-\lambda_{\circ} )(1-G_{\lambda}^{*})^{-1}G_{\lambda}^{*}RS-
(\lambda-\lambda_{\circ} )SR G_{\lambda} (1- G_{\lambda} )^{-1}\\
&-(\lambda-\lambda_{\circ} )^{2}(1-G_{\lambda}^{*})^{-1}G_{\lambda}^{*}RSR G_{\lambda} (1- G_{\lambda} )^{-1}\,.
\end{align*}
Since $S$ is symmetric and $H$-bounded, one has $SR\in\B(\F)$ and $RS\subseteq RS^{*}\subseteq (SR)^{*}\in\B(\F)$; therefore,  
\be\label{BF}
\widetilde S-S\in\B(\F)
\ee
and so 
$$
\|\widetilde S\psi\|\le a\,\|H\psi\|+\widetilde b\,\|\psi\|\,, \qquad \widetilde b:=b
+\|\widetilde S-S\|_{\F,\F}\,.
$$
Since, by Lemma \ref{bv} , $|\lambda |\| G_{\lambda} \|_{\F,\F}=|\lambda |\|G_{\lambda}^{*}\|_{\F,\F}$ grows as $|\lambda |^{s}$, $s<1$, one has that $\widetilde b$ grows sub-linearly as $|\lambda |\nearrow\infty$. Hence, by Remark \ref{sub}, $\lambda \in Z_{\widetilde S}$ whenever $\lambda $ is sufficiently below  $ \lambda_{\rm inf}$. Therefore, Theorem \ref{TK} provides a self-adjoint operator  $\widetilde H^{\circ}_{\widetilde S}$ which, by what we just proved above, coincides with $$
\widetilde H_{\widetilde S}:\dom(\widetilde H_{\widetilde S})\subseteq\F\to\F\,,\qquad \widetilde H_{\widetilde S}=\cH+A^{*}+A_{\widetilde S}\,,
$$
$$
\dom(\widetilde H_{\widetilde S})=\{\psi\in\F:\widetilde \psi_{\circ}:=\psi- G_{\lambda} \psi\in\H_{1}\},
$$
where
$$
A_{\widetilde S}\psi:=A\widetilde \psi_{\circ}-\widetilde T_{\widetilde S}\,\psi\,,\qquad 
\widetilde T_{\widetilde S}:=(1-G_{\lambda}^{*})\widetilde S(1- G_{\lambda} )
\,.
$$ 
By \eqref{wz}, one has $\ran(G- G_{\lambda} )\subseteq \H_{1}$ and so $\dom(\widetilde H_{\widetilde S})=\S$.  By the definition \eqref{wts}, one gets 
\begin{align}\label{StS}
&\widetilde H_{\widetilde S}\,\psi=
(\cH+A^{*})\psi+A\widetilde\psi_{\circ}-\widetilde T_{\widetilde S}\,\psi\nonumber\\
=&(\cH+A^{*})\psi+A\widetilde\psi_{\circ}+A( G_{\lambda} -G)\psi-T_{S}\psi\nonumber\\
=&(\cH+A^{*})\psi+A\psi_{\circ}+(1- G^{*})S(1- G)\psi\\
=&(\cH+A^{*}+A_{S})\psi\nonumber\\
=&H_{S}\psi\nonumber\,.
\end{align}
Now, let $\widetilde \Thl_{\widetilde S}+\widetilde \Ml_{z}$ be the operator in \eqref{thl} corresponding to $\lambda $ and $\widetilde S$. Then
\begin{align}\label{thl1}
\widetilde \Thl_{\widetilde S}+\widetilde \Ml_{z}=&\begin{bmatrix} \widetilde T_{\widetilde S}+A( G_{\lambda} -G_{z})&1- G^{*}_{\bar z}\\
1-G_{z}&-R_{z}
\end{bmatrix}
=\begin{bmatrix}  T_{S}+A(G-G_{z})&1- G^{*}_{\bar z}\\
1-G_{z}&-R_{z}
\end{bmatrix}
=\Thl_{S}+\Ml_{z}\,.
\end{align}
Therefore 
$$\lambda \in Z_{S}:=
\{z\in\varrho(H): (\Thl_{S}+\Ml_{z})^{-1}\in\B(\F\oplus\H_{1},\F\oplus\H_{-1})\}
$$ 
and, for any $z\in Z_{S}$,
\be\label{lr}
(-H_{S}+z)^{-1}=(-\widetilde H_{\widetilde S}+z)^{-1}=R_{z}+\G_{z}\big(\widetilde \Thl_{\widetilde S}+\widetilde \Ml_{z}\big)^{-1}\G_{\bar z}^{*}=R_{z}+\G_{z}\big(\Thl_{S}+\Ml_{z}\big)^{-1}\G_{\bar z}^{*}\,.
\ee
By \cite[Theorem 2.19 and Remark 2.20]{CFP}, one gets $Z_{S}=\varrho(H)\cap\varrho(H_{S})$.
Further, by the choice of $\lambda $ and by Lemma \ref{KR}, one has $\widetilde \Thl_{\widetilde S}^{-1}\in\B(\F\oplus\H_{1},\F\oplus\H_{-1})$ and so, by \eqref{lr}, 
\be\label{wt-res}
(-H_{S}+\lambda )^{-1}=(-\widetilde H_{\widetilde S}+\lambda )^{-1}=R_{\lambda}+\G_{\lambda}\widetilde \Thl_{\widetilde S}^{-1}\G_{\lambda}^{*}\,.
\ee
This gives $\lambda \in\varrho(H_{S})$; since $\lambda $ has to be sufficiently below  $ \lambda_{\rm inf}$  but is otherwise arbitrary,  $H_{S}$ is  bounded-from-below . 
\end{proof}
\begin{remark} Notice that, by \eqref{HG}, one gets $(\cH-\lambda_{\circ})(1-G)=\cH+A^{*}-
\lambda_{\circ}$, which implies that the operator $(\cH+A^{*})|\S$ is $\F$-valued. By $T_{S}=(1-G^{*})S(1-G)$, one has $A_{S}\psi=(A-(1-G^{*})S)\psi_{\circ}$ and so $A_{S}|\S$ is $\F$-valued as well. Therefore, $H_{S}$ is $\F$-valued.
\end{remark}
\begin{remark}\label{r-dom} By \eqref{wz}, the definition of $\S$ is $\lambda_{\circ}$-independent. Since $R\in\B(\H_{s-1},\H_{s})$, one has $AR\in \B(\H_{s-1},\F)$ whenever $A\in\B(\H_{s},\F)$. Then, by duality, $(AR)^{*}=G\in\B(\F,\H_{1-s})$ and so
$$
\S\subseteq\H_{1-s}\,.
$$   
Moreover, by \eqref{H3},
$$
\S\cap\H_{1}=\ker(G)=\ran(AR)^{\perp}\,.
$$
Hence, since $R:\F\to\H_{1}$ is a continuous bijection,
$$
\S\cap\H_{1}=\{0\}\quad\Leftrightarrow\quad \text{$\ran(A)$ is dense.}
$$ 
\end{remark}
\begin{remark}\label{indep} Given the self-adjoint $H_{S}$ with domain and action represented as in Theorem \ref{real} using some $\lambda_{\circ}<\lambda_{\rm inf}$,  one has, by \eqref{StS},  
$H_{S}=\tilde H_{\tilde S}$, where $\tilde H_{\tilde S}$ is the representation which uses a different $\tilde\lambda_{\circ}<\lambda_{\rm inf}$ and the symmetric $\tilde S$ defined as in \eqref{wts} with $\lambda=\tilde\lambda_{\circ}$. By $\tilde S-S\in\B(\F)$ (see \eqref{BF})), $\tilde S$ is $H$-small as well. Thus, we can view the two couples $(\lambda_{\circ}, S)$ and $(\tilde\lambda_{\circ},\tilde S)$ as members of two equivalent sets of coordinates for the manifold which parametrizes the family of our self-adjoint realizations of the formal sum $H+A^{*}+A$. 
\end{remark}
\begin{remark} The operator $\Thl_{S}$ appearing in the resolvent formula \eqref{krf} is self-adjoint. Indeed, by \eqref{thl1}, one gets $\Thl_{S}=\widetilde \Thl_{\widetilde S}+\widetilde \M_{\lambda_{\circ}}$. Since $\widetilde \Thl_{\widetilde S}$ is self-adjoint by Lemma \ref{supp} and $\widetilde \M_{\lambda_{\circ}}$ is bounded and symmetric, $\Thl_{S}$ is self-adjoint as well.
\end{remark}
\begin{remark}\label{RS=0} In \eqref{krf},  one has 
$$
(\Thl_{S}+\Ml_{z})^{-1}\in \B(\F\oplus\H_{1},\F\oplus\H_{-1})\,,\qquad z\in \varrho(H)\cap\varrho(H_{S})\,.
$$ 
By Remark \ref{S=0} and by \cite[Theorem 2.19 and Remark 2.20]{CFP},   one has, whenever $S=0$,
$$
(\Thl_{0}+\Ml_{z})^{-1}\in \B(\F\oplus\F)\,,\qquad z\in \varrho(H)\cap\varrho(H_{0})\,.
$$ 
\end{remark}
\begin{remark} If $A=0$ then $H_{S}=H-S$ and \eqref{krf} reduces to the usual resolvent formula for the perturbation by a $H$-small symmetric operator:
\begin{align*}
(-H_{S}+z)^{-1}=&R_{z}+\begin{bmatrix}0&R_{z}\end{bmatrix}\begin{bmatrix}-S&1\\1&-R_{z}\end{bmatrix}^{-1}\begin{bmatrix}0\\R_{z}\end{bmatrix}=R_{z}+(R_{z}(1+SR_{z})^{-1}-R_{z})\\
=&R_{z}(1+SR_{z})^{-1}\,.
\end{align*} 
\end{remark}
\end{section}
\begin{section}{Norm resolvent convergence}
Let 
$$
A_{n}:\dom(A_{n})\subseteq\F\to\F\,,\qquad n\ge 1\,,
$$ 
be a sequence of closable operators such that 
\be\label{1/2}
\dom(A_{n})\supseteq\H_{1/2}\,,\qquad A_{n}\in\B(\H_{1/2},\F)\,.
\ee 
Since $A_{n}$ is closable, 
$$(A_{n}^{*}+A_{n})^{*}\supset {A}^{**}_{n}+A^{*}_{n}=\overline{A}_{n}+A^{*}_{n}\supset A_{n}+A^{*}_{n}$$ and so $A_{n}^{*}+A_{n}$ is symmetric. We further suppose that 
\be\label{kr-n}
\text{$A_{n}^{*}+A_{n}$ is $H$-small}
\ee
so that, by the Rellich-Kato theorem,
\be\label{Hn}
H_{n}:=H+A_{n}^{*}+A_{n}:\H_{1}\subseteq\F\to\F
\ee
is self-adjoint and  bounded-from-below .\par
By Remark \ref{closable} and by the closability of $A_{n}$, $H_{n}$ does not match the hypotheses of Theorem \ref{real}. However, next lemma shows that its resolvent can still be represented by means of a formula having the same structure as the one in \eqref{krf}.  
\begin{lemma}\label{krf-n} Given $A_{n}$ as above, let $H_{n}:\H_{1}\subseteq\F\to\F$ be the self-adjoint operator in \eqref{Hn} and let $E_{n}:\F\to\F$ be symmetric and bounded. Then, the linear operators
$$
\A_{n}\psi:=A_{n}\psi\oplus\psi\,,\qquad \G_{n,z}:=(\A_{n}R_{z})^{*}:\H_{s}\oplus\F\to\H_{s}\,,
\qquad 0\le s\le 1\,,$$
$$
\Ml_{n,z}:=  \A_{n}(\G_{n}-\G_{n,z}):\F\oplus\F\to\F\oplus\H_{1}\,,
$$
$$
\Thl_{n}:=\begin{bmatrix} E_{n}-A_{n}RA_{n}^{*}&1-G_{n}^{*}\\1-G_{n}&-R  
\end{bmatrix}:\H_{s}\oplus\F\to\F\oplus\H_{s}\,,\qquad 0\le s\le 1\,,\qquad 
$$
$$
G_ {n,z}:=(A_{n}R_{\bar z})^{*}\,,\qquad 
G_{n}:=G_{n,\lambda_{\circ}}\,,\qquad
\G_{n}:=\G_{n,\lambda_{\circ}}\,,
$$
are bounded and
\be\label{Kn}
(-(H_{n}-E_{n})+z)^{-1}=R_{z}+\G_{n,z}\big(\Thl_{n}+\Ml_{n,z}\big)^{-1}\G_{n,\bar z}^{*}\,,\qquad z\in \varrho(H)\cap\varrho(H_{n}-E_{n})\,,
\ee
where the block operator matrix inverse exists in $\B(\F\oplus\H_{s},\H_{s}\oplus\F)$ for any $s\in[0,1]$.
\end{lemma}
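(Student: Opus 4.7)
The strategy is direct: reduce $\Thl_n+\Ml_{n,z}$ to an explicit block matrix, invert it via a Schur complement that recovers the resolvent of $H_n-E_n$, and check the identity \eqref{Kn} by substitution. Since $\lambda_\circ$ is real one has $G_{n,z}=R_zA_n^*$ and $G_{n,\bar z}^*=A_nR_z$; writing $\Ml_{n,z}$ blockwise and using the cancellation $A_nRA_n^*=A_nG_n$ against the analogous term in $\Thl_n$ gives
\begin{equation*}
\Thl_n+\Ml_{n,z}=\begin{bmatrix} E_n-A_nR_zA_n^* & 1-A_nR_z \\ 1-R_zA_n^* & -R_z\end{bmatrix}.
\end{equation*}
A direct verification shows the factorization
\begin{equation*}
\Thl_n+\Ml_{n,z}=\begin{bmatrix}1 & A_n\\ 0 & 1\end{bmatrix}\begin{bmatrix}E_n-A_n^*-A_n & 1\\ 1 & -R_z\end{bmatrix}\begin{bmatrix}1 & 0\\ A_n^* & 1\end{bmatrix},
\end{equation*}
the inner matrix being well defined on $\H_1\oplus\F$ by \eqref{kr-n}.

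The Schur complement of the inner matrix with respect to its $(2,2)$-block $-R_z$ equals $(E_n-A_n^*-A_n)+R_z^{-1}=-(H_n-E_n)+z$, which is boundedly invertible on $\F$ iff $z\in\varrho(H_n-E_n)$; denote its inverse by $S^{-1}$. A standard $2\times 2$ block inversion of the inner matrix, composed with the elementary triangular inverses of the outer factors, yields a closed expression for $(\Thl_n+\Ml_{n,z})^{-1}$ as a polynomial in $S^{-1}$, $R_z$, $A_n$, $A_n^*$, and the identity. Substituting $\G_{n,\bar z}^*\psi=(A_nR_z\psi)\oplus(R_z\psi)$ and $\G_{n,z}(\phi_0\oplus\phi_1)=R_z(A_n^*\phi_0+\phi_1)$ into the right-hand side of \eqref{Kn} and expanding, the identities $(H-z)R_z=-1$ on $\F$ and $R_z(H-z)=-1$ on $\H_1$ trigger a cascade of cancellations: the four resulting terms regroup into $G_{n,z}S^{-1}+(1-G_{n,z})S^{-1}-R_z=S^{-1}-R_z$, so that the full right-hand side collapses to $R_z\psi+(S^{-1}-R_z)\psi=S^{-1}\psi=(-(H_n-E_n)+z)^{-1}\psi$, proving \eqref{Kn}.

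The main obstacle is establishing the claimed boundedness in $\B(\F\oplus\H_s,\H_s\oplus\F)$ uniformly in $s\in[0,1]$. The hypothesis $A_n\in\B(\H_{1/2},\F)$ alone yields only $A_n^*\in\B(\F,\H_{-1/2})$, but interpolation shows that $A_n$ is itself $H$-small, so from \eqref{kr-n} also $A_n^*$ is $H$-small and hence belongs to $\B(\H_1,\F)$. Interpolating with $A_n^*\in\B(\F,\H_{-1/2})$ gives $A_n^*\in\B(\H_s,\H_{s-1/2})$ (and symmetrically for $A_n$) for $s\in[0,1]$, from which the $\B(\H_s,\H_s)$-boundedness of $G_{n,z}$ and $G_{n,z}^*$ follows. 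The required mapping properties of $(\Thl_n+\Ml_{n,z})^{-1}$ then follow by rewriting each of its off-diagonal blocks in two equivalent forms: as $-S^{-1}[(\cH-z)+A_n]$, bounded $\H_s\to\H_{s-1}\to\H_s$ for $s\ge 1/2$, and as $1+S^{-1}(A_n^*-E_n)$, bounded $\H_s\to\H_s$ for $s\le 1/2$; together these cover the full range $s\in[0,1]$, and a similar rewriting handles the diagonal blocks.
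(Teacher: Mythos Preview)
Your computation of the block form of $\Thl_n+\Ml_{n,z}$, the Schur complement, and the verification of \eqref{Kn} parallel the paper's proof; the triangular factorization is a nice shortcut but is only formal, since the inner matrix makes sense on $\H_1\oplus\F$ only. Where you genuinely diverge is in the boundedness argument: the paper proves $s=1$ directly from the explicit block inverse, then handles $s=0$ only at a special $z_\circ=i\gamma$ with $|\gamma|$ large enough that $(1-G_{n,i\gamma})^{-1}\in\B(\F)$, rewrites the $(2,2)$-block using this invertibility, invokes \cite[Thm.~2.19]{CFP} to propagate to all $z\in\varrho(H)\cap\varrho(H_n-E_n)$, and finally interpolates for intermediate $s$. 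Your rewriting via $((-H+z)-A_n^{*})S^{-1}=1+(A_n-E_n)S^{-1}$ and its companion $S^{-1}((-H+z)-A_n)=1+S^{-1}(A_n^{*}-E_n)$ covers every $z$ and every $s$ at once, avoiding both the special-$z$ detour and the external reference; for the $(2,2)$-block you left implicit, combining both identities gives the form $-E_n+(A_n-E_n)S^{-1}(A_n^{*}-E_n)$, which lies in $\B(\F)$ since $A_n^{*}-E_n\in\B(\F,\H_{-1/2})$, $S^{-1}\in\B(\H_{-1/2},\H_{1/2})$ and $A_n-E_n\in\B(\H_{1/2},\F)$. Two small slips: interpolating $A_n^{*}\in\B(\H_1,\F)$ with $A_n^{*}\in\B(\F,\H_{-1/2})$ yields $A_n^{*}\in\B(\H_s,\H_{(s-1)/2})$, not $\B(\H_s,\H_{s-1/2})$ (either exponent, or the paper's $A_n^{*}\in\B(\H_s,\H_{s-1})$, still gives $G_{n,z}\in\B(\H_s)$); and $G_{n,z}^{*}\in\B(\H_s)$ for $s>0$ is neither justified by your argument nor needed.
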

\begin{proof} By \eqref{1/2} and by $R_{z}\in\B(\H_{-1/2},\H_{1/2})$, one has $A_{n}R_{z}A_{n}^{*}\in\B(\F)$. Exploiting the definitions of $\A_{n}$ and $\G_{n,z}$, one gets
$$
\M_{n,z}=\begin{bmatrix} A_{n}(G_{n}-G_{n,z})&G^{*}_{n}-G^{*}_{n,z}\\G_{n}-G_{n,z}&R-R_{z}  
\end{bmatrix}=\begin{bmatrix} A_{n}RA_{n}^{*}-A_{n}R_{z}A_{n}^{*}&(\bar z-\lambda_{\circ})G^{*}_{n,z}R\\(z-\lambda_{\circ})RG^{*}_{n,z}&R-R_{z}  
\end{bmatrix}
$$
and so $\M_{n,z}\in\B(\F\oplus\F,\F\oplus\H_{1})$.
By \eqref{kr-n}, $(A_{n}^{*}+A_{n})\in\B(\H_{1},\F)$. Thus, by $A_{n}\in\B(\H_{1/2},\F)\subset\B(\H_{1},\F)$, one gets $A_{n}^{*}\in\B(\H_{1},\F)$. Since, by duality, $A_{n}\in\B(\F,\H_{-1})$ one obtains, by interpolation, $A_{n}^{*}\in\B(\H_{s},\H_{s-1})$ and hence $G_ {n,z}\in\B(\H_{s})$, $\G_ {n,z}\in\B(\H_{s}\oplus\F,\H_{s})$. Thus,    $\Thl_{n}\in\B(\H_{s}\oplus\F,\F\oplus\H_{s})$. Using the operator block matrix representations of $\Thl_{n}$ and $\M_{n,z}$, 
one obtains 
$$
\Thl_{n}+\M_{n,z}=\begin{bmatrix} E_{n}-A_{n}RA_{n}^{*}&1-G_{n,\bar z}^{*}\\
1-G_{n,z}&-R_{z}  
\end{bmatrix}\,.
$$
Denoting by $R_{n,z}$ the second Schur complement of $\Thl_{n}+\M_{n,z}$, one has
$$
R_{n,z}=\big((1-G_{n,\bar z}^{*})R_{z}^{-1}(1-G_{n,z})+E_{n}-A_{n}RA_{n}^{*}\big)^{-1}\,.
$$
Therefore, by the identity
$$
(1-G_{n,\bar z}^{*})R_{z}^{-1}(1-G_{n,z})=
(1-G_{n,\bar z}^{*})(-H+z)(1-G_{n,z})=
-H_{n}+z+A_{n}R_{z}A_{n}^{*}\,,
$$
one gets 
\begin{align}\label{schur}
R_{n,z}=(-(H_{n}-E_{n})+z)^{-1}\,.
\end{align}
Hence, for any $z\in\varrho(H_{n}-E_{n})$, 
\begin{align}\label{TS-n}
&(\Thl_{n}+\M_{n,z})^{-1}\nonumber\\=&\begin{bmatrix} R_{n,z}&R_{n,z}(1-G_{n,\bar z}^{*})(-H+z)\\
(-H+z)(1-G_{n,z})R_{n,z}& -(-H+z)\big(1-(1-G_{n,z})R_{n,z}(1-G_{n,\bar z}^{*})(-H+z)\big) 
\end{bmatrix}\,.
\end{align}
Since $G_{n,z}\in\B(\H_{1})$ and $G^{*}_{n,z}\in\B(\F)$, one gets, by \eqref{TS-n},  $(\Thl_{n}+\M_{n,z})^{-1}\in \B(\F\oplus\H_{1},\H_{1}\oplus\F)$.  By interpolation, to show that $(\Thl_{n}+\M_{n,z})^{-1}\in \B(\F\oplus\H_{s},\H_{s}\oplus\F)$, $0\le s\le 1$, is enough to show that 
$(\Thl_{n}+\M_{n,z})^{-1}\in \B(\F\oplus\F,\F\oplus\F)$. To prove that the latter holds for any $z\in\varrho(H)\cap\varrho(H_{n}-E_{n})$ it suffices, by \cite[Theorem 2.19 and Remark 2.20]{CFP}, to show that it holds for a single point $z_{\circ}$. Here, we take $z_{\circ}=i\gamma$ and, by Remark \ref{ig}, we can choose a real $\gamma$ with $|\gamma|$ sufficiently large so that $\|G_{n,i\gamma}\|_{\F,\F}\le 
\|A_{n}\|_{\H_{1/2},\F}\,|\gamma|^{-1/2}<1/2$. Thus, both $(1-G_{n,i\gamma})$ and $(1-G^{*}_{n,-i\gamma})$ have bounded inverses in $\F$ and we can define the bounded operator in $\F$
$$
S_{n,i\gamma}:=
(1-G_{n,-i\gamma}^{*})^{-1}(E_{n}-A_{n}RA_{n}^{*})(1-G_{n,i\gamma})^{-1}\,.
$$
Furthermore, by  $\|R_{i\gamma}\|_{\H_{-1/2},\H_{1/2}}=\|R_{i\gamma}\|_{\F,\H_{1}}\le 1$,
\begin{align*}
&\|S_{n,i\gamma}\|_{\F,\F}\le (1-\|G_{n,i\gamma}\|_{\F,\F})^{-2}(\|A_{n}\|^{2}_{\H_{1/2},\F}\|R_{i\gamma}\|_{\H_{-1/2},\H_{1/2}}+\|E_{n}\|_{\F,\F})\\
\le&(1-\|A_{n}\|_{\H_{1/2},\F}\,|\gamma|^{-1/2})^{-2}(\|A_{n}\|^{2}_{\H_{1/2},\F}\|R_{i\gamma}\|_{\H_{-1/2},\H_{1/2}}+\|E_{n}\|_{\F,\F}) \\
&\le 4 (\|A_{n}\|^{2}_{\H_{1/2},\F}+\|E_{n}\|_{\F,\F})
\end{align*}
and so $\|S_{n,i\gamma} R_{i\gamma}\|_{\F,\F}<1$ whenever $|\gamma|$ is sufficiently large.
Then, by \eqref{schur}, one has
$$
(1-G_{n,i\gamma})R_{n,i\gamma}(1-G_{n,-i\gamma}^{*})=(-(H-S_{n,i\gamma})+i\gamma)^{-1}=
R_{i\gamma}-R_{i\gamma}(1+S_{n,i\gamma}R_{i\gamma})^{-1}S_{n,i\gamma}R_{i\gamma}\,.
$$
Hence,
\begin{align*}
&(-H+i\gamma)-(-H+i\gamma)(1-G_{n,i\gamma})R_{n,i\gamma}(1-G_{n,-i\gamma}^{*})(-H+i\gamma)\\
=&(1+S_{n,i\gamma}R_{i\gamma})^{-1}S_{n,i\gamma}\in\B(\F)\,.
\end{align*}
By $G_ {n,z}\in\B(\H_{1})$, $R_{n,z}\in\B(\F,\H_{1})$ and, by duality, $G^{*}_ {n,z}\in\B(\H_{-1})$, 
$R_{n,z}^{*}=R_{n,\bar z}\in\B(\H_{-1},\F)$. In conclusion,  $(\Thl_{n}+\M_{n,i\gamma})^{-1}\in \B(\F\oplus\F)$.\par
Finally, by \eqref{TS-n}, one obtains, for any $z\in\varrho(H)\cap\varrho(H_{n}-E_{n})$,
\begin{align*}
&R_{z}+\G_{n,z}\big(\Thl_{n}+\Ml_{n,z}\big)^{-1}\G_{n,\bar z}^{*}
=R_{z}+\begin{bmatrix}G_{n,z}&R_{z}\end{bmatrix}\big(\Thl_{n}+\Ml_{n,z}\big)^{-1}
\begin{bmatrix}G^{*}_{n,\bar z}\\R_{z}\end{bmatrix}\\
=&R_{z}+\begin{bmatrix}G_{n,z}&1\end{bmatrix}\begin{bmatrix} R_{n,z}&
R_{n,z}(1-G_{n,\bar z}^{*})\\
(1-G_{n,z})R_{n,z}& -R_{z}+(1-G_{n,z})R_{n,z}(1-G_{n,\bar z}^{*})\big) 
\end{bmatrix}
\begin{bmatrix}G^{*}_{n,\bar z}\\1\end{bmatrix}\\
=&R_{z}+G_{n,z}R_{n,z}G_{n,\bar z}^{*}+G_{n,z}R_{n,z}(1-G_{n,\bar z}^{*})+
(1-G_{n,z})R_{n,z}G_{n,\bar z}^{*} -R_{z}
\\&\ \ \ \, +(1-G_{n,z})R_{n,z}(1-G_{n,\bar z}^{*})\\
=&R_{n,z}=(-(H_{n}-E_{n})+z)^{-1}\,.
\end{align*}
\end{proof}
The resolvent formulae \eqref{krf} and \eqref{Kn} lead to the following
\begin{theorem}\label{teo-conv} Let $A$, $S$, $T_{S}$ and $H_{S}$ be as in Theorem \ref{real}; let  
$A_{n}$ and $H_{n}$ be as in Lemma \ref{krf-n}. Suppose that 
\be\label{convA}
\lim_{n\nearrow\infty}\|A_{n}-A\|_{\H_{1},\F}=0
\ee
and that there exists a sequence $\{E_{n}\}_{1}^{\infty}$ of bounded symmetric operators  in $\F$ such that 
\be\label{unif}
\text{ $(E_{n}-A_{n}RA_{n}^{*})$ is uniformly $(H_{n}-A_{n}RA_{n}^{*})$-small } 
\ee
and
\be\label{ARA*-TS}
\lim_{n\nearrow\infty}\,\sup_{\{\psi\in\S\,:\, \|T_{S}\psi\|^{2}+\|\psi\|^{2}\le 1\}}\|(E_{n}-A_{n}RA_{n}^{*})\psi-T_{S}\psi\|=0\,.
\ee
Then  
\be\label{NR}
\lim_{n\nearrow\infty}(H_{n}-E_{n})=H_{S}\quad\text{in norm resolvent sense.}
\ee
 \end{theorem}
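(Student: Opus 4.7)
My plan is to compare the resolvent formulas \eqref{krf} and \eqref{Kn} at a single real point $\lambda$ chosen sufficiently below $\lambda_{\rm inf}$, prove $(-(H_n-E_n)+\lambda)^{-1}\to(-H_S+\lambda)^{-1}$ in $\B(\F)$, and then promote this to arbitrary $z\in\varrho(H_S)$ via the first resolvent identity together with uniform resolvent bounds. To set up such a $\lambda$, hypothesis \eqref{unif} combined with Kato--Rellich applied to $H_n-E_n=(H_n-A_nRA_n^*)-(E_n-A_nRA_n^*)$ yields an $n$-independent lower bound on $H_n-E_n$, so I pick $\lambda$ strictly below these bounds, with $\lambda\in\varrho(H_S)$ and $\sup_n\|(-(H_n-E_n)+\lambda)^{-1}\|_{\F,\F}<\infty$. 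By Remark \ref{sub} applied to the $H$-small $\widetilde S$ of \eqref{wts} that represents $H_S$ with $\lambda$ as base point, combined with \eqref{thl1}, I may also arrange that $M_\lambda:=\Thl_S+\Ml_\lambda$ is boundedly invertible from $\F\oplus\H_1$ to $\S\oplus\F$ (with $\S$ endowed with the $T_S$-graph norm); by Lemma \ref{krf-n} the same holds for $M_{n,\lambda}:=\Thl_n+\Ml_{n,\lambda}$.

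The telescoping identity
\begin{align*}
\G_\lambda M_\lambda^{-1}\G_\lambda^*-\G_{n,\lambda}M_{n,\lambda}^{-1}\G_{n,\lambda}^*
&=(\G_\lambda-\G_{n,\lambda})M_\lambda^{-1}\G_\lambda^*+\G_{n,\lambda}M_{n,\lambda}^{-1}(\G_\lambda-\G_{n,\lambda})^*\\
&\quad +\G_{n,\lambda}M_{n,\lambda}^{-1}(M_{n,\lambda}-M_\lambda)M_\lambda^{-1}\G_\lambda^*
\end{align*}
reduces the comparison to three pieces. From \eqref{convA} and $R_\lambda\in\B(\F,\H_1)$ I get $\|G_{n,\lambda}-G_\lambda\|_{\F,\F}\to 0$; the first summand then vanishes in $\B(\F)$ by boundedness of $M_\lambda^{-1}\G_\lambda^*$, while the explicit block form of $M_{n,\lambda}^{-1}$ from \eqref{TS-n} simplifies $\G_{n,\lambda}M_{n,\lambda}^{-1}(\eta\oplus 0)$ to $R_{n,\lambda}\eta=(-(H_n-E_n)+\lambda)^{-1}\eta$, which is uniformly bounded, hence the third summand vanishes as well once combined with $\G_\lambda^*-\G_{n,\lambda}^*=(G_\lambda^*-G_{n,\lambda}^*)\oplus 0$.

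The middle summand is the genuine obstacle. A block computation gives
\[
M_{n,\lambda}-M_\lambda=\begin{bmatrix}\Delta_n & -(G_{n,\lambda}-G_\lambda)^*\\ -(G_{n,\lambda}-G_\lambda) & 0\end{bmatrix},\qquad\Delta_n:=(E_n-A_nR_\lambda A_n^*)-T_S-A(G-G_\lambda),
\]
with off-diagonal blocks yielding vanishing terms as above. Using $G-G_\lambda=(R-R_\lambda)A^*$ from \eqref{RA*}, I decompose
\[
\Delta_n=\big((E_n-A_nRA_n^*)-T_S\big)+\big(A_n(R-R_\lambda)A_n^*-A(R-R_\lambda)A^*\big);
\]
the second bracket splits as $(A_n-A)(R-R_\lambda)A_n^*+A(R-R_\lambda)(A_n^*-A^*)$ and vanishes in $\B(\F)$ by \eqref{convA} and uniform boundedness of $A_n$ in $\B(\H_1,\F)$. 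For the first bracket I invoke the key range property: since $M_\lambda^{-1}:\F\oplus\H_1\to\S\oplus\F$ is bounded and $\G_\lambda^*\in\B(\F,\F\oplus\H_1)$, for $\|\xi\|\le 1$ the vector $\phi_\xi:=(M_\lambda^{-1}\G_\lambda^*\xi)_1\in\S$ satisfies $\|T_S\phi_\xi\|^2+\|\phi_\xi\|^2\le C$ uniformly; hypothesis \eqref{ARA*-TS} then forces $\sup_{\|\xi\|\le 1}\|(E_n-A_nRA_n^*-T_S)\phi_\xi\|\to 0$. Assembling, the middle summand tends to zero, yielding norm convergence at $z=\lambda$ and hence \eqref{NR}.

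The principal difficulty is the first bracket of $\Delta_n$: \eqref{ARA*-TS} only provides graph-norm convergence on $\S$, so the entire argument hinges on identifying, via Theorem \ref{real}, the first component of $M_\lambda^{-1}\G_\lambda^*$ as a vector in $\S$ with uniformly bounded $T_S$-graph norm --- exactly the subset on which \eqref{ARA*-TS} gives control.
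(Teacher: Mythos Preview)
Your approach---working at a single real $\lambda$ below the spectrum, establishing $\sup_n\|R_{n,\lambda}\|_{\F,\F}<\infty$ directly via Kato--Rellich, and exploiting the algebraic simplification $\G_{n,\lambda}M_{n,\lambda}^{-1}(\eta\oplus 0)=R_{n,\lambda}\eta$---is genuinely different from the paper's two-step argument (first the $S=0$ case, then the general case via \eqref{unif}), and most of it is sound. In particular, the handling of the diagonal block $\Delta_n$ via \eqref{ARA*-TS} is correct: $M_\lambda^{-1}\G_\lambda^{*}$ does land in $\S\oplus\F$ with uniformly bounded $T_S$-graph norm on the first component, by the factorization \eqref{GSG} applied at $\lambda$.

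There is, however, a real gap. In the middle summand, the $(2,1)$ off-diagonal block $-(G_{n,\lambda}-G_\lambda)$ of $M_{n,\lambda}-M_\lambda$ sends $\phi_\xi$ into the \emph{second} component, producing a term
\[
\G_{n,\lambda}M_{n,\lambda}^{-1}\big(0\oplus(G_\lambda-G_{n,\lambda})\phi_\xi\big)\,.
\]
Your simplification $\G_{n,\lambda}M_{n,\lambda}^{-1}(\eta\oplus 0)=R_{n,\lambda}\eta$ covers only the first component, so the assertion that the off-diagonal blocks ``yield vanishing terms as above'' is unjustified here. A short computation gives $\G_{n,\lambda}M_{n,\lambda}^{-1}(0\oplus\varphi)=[M_{n,\lambda}^{-1}(0\oplus\varphi)]_1-\varphi$, so one needs a uniform bound on the $(1,2)$ entry $R_{n,\lambda}(1-G_{n,\lambda}^*)(-H+\lambda)$ as a map $\F\to\F$. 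This is precisely the estimate \eqref{12} that the paper isolates and proves via the preliminary $S=0$ norm-resolvent convergence \eqref{NR0} (which yields \eqref{sup2}) combined with hypothesis \eqref{unif}. It does \emph{not} follow from $\sup_n\|R_{n,\lambda}\|_{\F,\F}<\infty$ alone: one would need either $\|G_{n,\lambda}^*\|_{\H_{-1},\H_{-1}}$ (equivalently $\|A_n^*\|_{\H_1,\F}$) or $\|(1-G_{n,\lambda})^{-1}\|_{\F,\F}$ uniformly bounded, and neither is available since no uniform control on $\|A_n\|_{\H_{1/2},\F}$ or on the $H$-smallness constants of $A_n^*+A_n$ is assumed.

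A smaller point: your opening step---that \eqref{unif} plus Kato--Rellich give an $n$-independent lower bound on $H_n-E_n$---tacitly uses that $H_n-A_nRA_n^*$ is itself uniformly bounded below. This is true (the form identity $(1-G_n^*)(-H+\lambda_\circ)(1-G_n)=-(H_n-A_nRA_n^*)+\lambda_\circ$ shows $H_n-A_nRA_n^*\ge\lambda_\circ$), but you should state it.
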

\begin{proof}  
At first, we consider the case $S=0$ and $E_{n}=A_{n}RA_{n}^{*}$ (notice that with such a choice, \eqref{unif} and \eqref{ARA*-TS} trivially hold true). Then, one has
$$
\Thl_{n}=\Thl^{\circ}_{n}:=\begin{bmatrix} 0&1-G_{n}^{*}\\1-G_{n}&-R  
\end{bmatrix}\,,\qquad \Thl_{S}=\Thl_{0}:=\begin{bmatrix} 0&1-G^{*}\\1-G&-R  
\end{bmatrix}\,.
$$
By \eqref{krf} and \eqref{Kn}, one has, for any $z\in\CO\backslash\RE$,
\begin{align*}
&(-(H_{n}-A_{n}RA_{n}^{*})+z)^{-1}-(-H_{0}+z)^{-1}\\
=&\G_{n,z}(\Thl^{\circ}_{n}+\M_{n,z})^{-1}\G_{n,\bar z}^{*}
-\G_{z}(\Thl_{0}+\M_{z})^{-1}\G_{\bar z}^{*}\\
=&\G_{n,z}(\Thl^{\circ}_{n}+\M_{n,z})^{-1}(\G^{*}_{n,\bar z}-\G_{\bar z}^{*})
-(\G_{z}-\G_{n,z})(\Thl_{0}+\M_{z})^{-1}\G_{\bar z}^{*}\\
&+
\G_{n,z}\big((\Thl^{\circ}_{n}+\M_{n,z})^{-1}-(\Thl_{0}+\M_{z})^{-1}\big)\G_{\bar z}^{*}\,.
\end{align*}
By \eqref{convA}, 
\be\label{convGn}
\lim_{n\nearrow\infty}\|\G_{n,z}-\G_{z}\|_{\F\oplus\H_{-1},\F}=
\lim_{n\nearrow\infty}\|\G^{*}_{n,z}-\G_{z}^{*}\|_{\F,\F\oplus\H_{1}}=0
\ee
and so 
\be\label{NR0}
\lim_{n\nearrow\infty}(H_{n}-A_{n}RA_{n}^{*})=H_{0}\quad\text{in norm resolvent sense.}
\ee
amounts to show that for some $z_{\circ}\in\CO\backslash\RE$ there holds
\be\label{convTM0}
\lim_{n\nearrow\infty}\|(\Thl^{\circ}_{n}+\M_{n,z_{\circ}})^{-1}-(\Thl_{0}+\M_{z_{\circ}})^{-1}\|_{\F\oplus\H_{1},\F\oplus\H_{-1}}=0\,.
\ee
By \eqref{convGn} and by the relations
$$
\M_{n,z}-\M_{z}=(z-\lambda_{\circ})(\G_{n}^{*}\G_{n,z}-\G^{*}\G_{z})
=(z-\lambda_{\circ})\big((\G_{n}^{*}(\G_{n,z}-\G_{z})-(\G^{*}-\G_{n}^{*})\G_{z})\big)\,,
$$
one gets, for any $z\in\varrho(H)$, 
\be\label{convMn}
\lim_{n\nearrow\infty}\|\M_{n,z}-\M_{z}\|_{\F\oplus\H_{-1},\F\oplus\H_{1}}=0\,.
\ee
By
$$
\Thl_{0}-\Thl^{\circ}_{n}=\begin{bmatrix} 0&G_{n}^{*}-G^{*}\\G_{n}-G&0  
\end{bmatrix}:\F\oplus\F\to\F\oplus\F\,,
$$
and by
\be\label{GNzGz}
\lim_{n\nearrow\infty}\|G_{n,z}-G_{z}\|_{\F,\F}=
\lim_{n\nearrow\infty}\|G^{*}_{n,z}-G_{z}^{*}\|_{\F,\F}=0\,,
\ee
one gets
$$
\lim_{n\nearrow\infty}\|\Thl_{0}-\Thl^{\circ}_{n}\|_{\F\oplus\F,\F\oplus\F}=0\,.
$$
Hence, for any $z\in\varrho(H)$, 
\be\label{LF}
\lim_{n\nearrow\infty}\|\Thl_{0}+\M_{z}-(\Thl^{\circ}_{n}-\M_{n,z})\|_{\F\oplus\F,\F\oplus\F}=0\,.
\ee
By the continuity of the inversion map in $\{L\in\B(\F\oplus\F):L^{-1}\in\B(\F\oplus\F)\}$ (notice that $(\Thl_{0}+\M_{z})^{-1}\in\B(\F\oplus\F)$ by Remark \ref{RS=0}), that implies, for any $z\in\CO\backslash\RE$, 
\be\label{LFI}
\lim_{n\nearrow\infty}\|(\Thl_{0}+\M_{z_{\circ}})^{-1}-(\Thl^{\circ}_{n}-\M_{n,z_{\circ}})^{-1}\|_{\F\oplus\F,\F\oplus\F}=0\,;
\ee
hence \eqref{convTM0} follows and \eqref{NR0} holds true.
\par
Let us now consider the case $S\not=0$. By \eqref{krf} and \eqref{Kn}, by the same arguments as in the $S=0$ case,
\eqref{NR} amounts to show that for some $z_{\circ}\in\CO\backslash\RE$ there holds
\be\label{convTM}
\lim_{n\nearrow\infty}\|(\Thl_{n}+\M_{n,z_{\circ}})^{-1}-(\Thl_{S}+\M_{z_{\circ}})^{-1}\|_{\F\oplus\H_{1},\F\oplus\H_{-1}}=0\,.
\ee
Considering the invertible operators   
$$
\Thl_{n}+\M_{n,z}:\F\oplus\F\to\F\oplus\F\,,\qquad \Thl_{S}+\M_{z}:\S\oplus\F\to\F\oplus\H_{1}\,,
$$ 
the identity
\begin{align*}
&(\Thl_{n}+\M_{n,z})^{-1}-(\Thl_{S}+\M_{z})^{-1}
=
(\Thl_{n}+\M_{n,z})^{-1}(\Thl_{S}-\Thl_{n}+\M_{z}-\M_{n,z})(\Thl_{S}+\M_{z})^{-1}\\
\end{align*}
holds in $\B(\F\oplus\H_{1},\F\oplus\F)$.
By \eqref{convMn}, by
$$
\Thl_{S}-\Thl_{n}=\begin{bmatrix} T_{S}-(E_{n}-A_{n}RA_{n}^{*})&G_{n}^{*}-G^{*}\\G_{n}-G&0  
\end{bmatrix}:\S\oplus\F\to\F\oplus\F\,,
$$
by \eqref{ARA*-TS} and by \eqref{GNzGz}, in order to get \eqref{NR} it remains to show that for some $z_{\circ}\in\CO\backslash\RE$ there holds
\be\label{supTn}
\sup_{n\ge 1}\|(\Thl_{n}+\M_{n,z_{\circ}})^{-1}\|_{\F\oplus\F,\F\oplus\H_{-1}}<+\infty\,.
\ee
By \eqref{TS-n} and by $(-H+z_{\circ})\in\B(\F,\H_{-1})$, the bound \eqref{supTn} holds whenever
\be\label{11}
\sup_{n\ge 1}\|R_{n,z_{\circ}}\|_{\F,\F}
<+\infty\,,
\ee
\be\label{12}
\sup_{n\ge 1}\|R_{n,z_{\circ}}(1-G^{*}_{n,\bar z_{\circ}})(-H+z_{\circ})\|_{\F,\F}
<+\infty\,,
\ee
\be\label{21}
\sup_{n\ge 1}\|(1-G_{n,z_{\circ}})R_{n,z_{\circ}}\|_{\F,\F}
<+\infty\,,
\ee
\be\label{22}
\sup_{n\ge 1}\|(1-G_{n,z_{\circ}})R_{n,z_{\circ}}(1-G^{*}_{n,\bar z_{\circ}})(-H+z_{\circ})\|_{\F,\F}
<+\infty\,,
\ee
where $R_{n,z}:=(-(H_{n}-E_{n})+z)^{-1}$. Since, by \eqref{GNzGz}, there holds
$$
\sup_{n\ge 1}\|G_{n,z_{\circ}}\|_{\F,\F}<+\infty\,,
$$
it suffices to prove \eqref{11} and \eqref{12}. By \eqref{NR0}, one gets, for any $z\in\CO\backslash\RE$,
\be\label{sup1}
\sup_{n\ge 1}\big\|R_{n,z}^{\circ}\big\|_{\F,\F}<+\infty\,,\qquad 
R_{n,z}^{\circ}:=(-(H_{n}-A_{n}RA_{n}^{*})+z)^{-1}\,.
\ee
Furthermore, by \eqref{TS-n} with $E_{n}=A_{n}RA_{n}^{*}$ and by \eqref{LFI}, one gets, for any $z\in\CO\backslash\RE$,
\be\label{sup2}
\sup_{n\ge 1}\big\|R_{n,z}^{\circ}(1-G_{n,\bar z}^{*})(-H+z)\big\|_{\F,\F}<+\infty\,.
\ee
By writing $H_{n}-E_{n}=H_{n}-A_{n}RA_{n}^{*}-(E_{n}-A_{n}RA_{n}^{*})$, the assumption \eqref{unif} gives, 
for any real $\gamma$ such that $|\gamma|$ is sufficiently large, 
$$\sup_{n\ge 1}\big\|(E_{n}-A_{n}RA_{n}^{*})R_{n,i\gamma}^{\circ}\big\|_{\F,\F}<1 
\,.
$$
That entails
$$
\sup_{n\ge 1}\big\|(1+(E_{n}-A_{n}RA_{n}^{*})R_{n,i\gamma}^{\circ}\big)^{-1}\big\|_{\F,\F}\le
\sup_{n\ge 1}\, \big(1-\big\|(E_{n}-A_{n}RA_{n}^{*})R_{n,i\gamma}^{\circ}\big\|_{\F,\F}\big)^{-1}<+\infty
$$
and
$$
R_{n,i\gamma}=\big(1+(E_{n}-A_{n}RA_{n}^{*})R_{n,i\gamma}^{\circ}\big)^{-1}R_{n,i\gamma}^{\circ}\,.
$$
Therefore, whenever $z_{\circ}= i\gamma$,  one gets, by \eqref{sup1},
\begin{align*}
&\sup_{n\ge 1}\big\|R_{n,z_{\circ}}\big\|_{\F,\F}=\sup_{n\ge 1}\big\|\big(1+(E_{n}-A_{n}RA_{n}^{*})R_{n,z_{\circ}}^{\circ}\big)^{-1}R_{n,z_{\circ}}^{\circ}\big\|_{\F,\F}\\
\le&\sup_{n\ge 1}\big\|\big(1+(E_{n}-A_{n}RA_{n}^{*})R_{n,z_{\circ}}^{\circ}\big)^{-1}\big\|_{\F,\F}\,\|R_{n,z_{\circ}}^{\circ}\big\|_{\F,\F}<
+\infty
\end{align*}
and, by \eqref{sup2}, 
\begin{align*}
&\sup_{n\ge 1}\|R_{n,z_{\circ}}(1-G^{*}_{n,\bar z_{\circ}})(-H+z_{\circ})\|_{\F,\F}\\
=&\sup_{n\ge 1}\|\big(1+(E_{n}-A_{n}RA_{n}^{*})R_{n,z_{\circ}}^{\circ}\big)^{-1}
R_{n,z_{\circ}}^{\circ}(1-G^{*}_{n,\bar z_{\circ}})(-H+z_{\circ})\|_{\F,\F}\\
\le&\sup_{n\ge 1}\|\big(1+(E_{n}-A_{n}RA_{n}^{*})R_{n,z_{\circ}}^{\circ}\big)^{-1}\|_{\F,\F}\,\|
R_{n,z_{\circ}}^{\circ}(1-G^{*}_{n,\bar z_{\circ}})(-H+z_{\circ})\|_{\F,\F}
<+\infty\,;
\end{align*}
hence \eqref{convTM} follows and \eqref{NR} holds true.
\end{proof}
\begin{remark} By the arguments in \cite[Section 3.1]{MPAG} (which build on \cite{LS}, \cite{Sch1}, \cite{Sch2}), Theorems \ref{real} and \ref{teo-conv} applies to the Nelson model. In this case,  with $\F$, $H$ and $A$ as in the introduction, one has  $A\in\B(\H_{s},\F)$, $s>3/4$, and both $\ker(A)$ and $\ran(A)$ are dense in $\F$. Furthermore, in agreement with the renormalization counter term appearing in the introduction, by taking $E_{n}$ equal to the (multiplication operator  by the) leading term in the expansion in the coupling constant $g$ of (minus) the ground state energy at zero total momentum of the regularized Hamiltonian $H_{n}$ corresponding to the ultraviolet cutoff $\Lambda=n$, i.e., by taking $E_{n}=g^{2}N{\mathcal E}_{n}$, where (see \cite[relation (8)]{N})
$$
{\mathcal E}_{n}:=\frac{1}{2(2\pi)^{3}}\int_{|\kappa|< n}(|\kappa|^{2}+\mu^{2})^{-1/2}\left(\frac{|\kappa|^{2}}{2m}+(|\kappa|^{2}+\mu^{2})^{1/2}\right)^{\!\!-1}d\kappa\,,
$$ 
one gets \eqref{ARA*-TS} with a suitable $H$-small symmetric operator $\widehat S$. Therefore, $\widehat H:=H_{\widehat S}$ coincides with the Nelson Hamiltonian. For the precise definition of $T_{\widehat S}$ (and consequently of $\widehat S$) and for an in-depth study of its properties, we refer to \cite[Section 3]{LS}.\par In a similar way, Theorems \ref{real} and \ref{teo-conv} can be applied to other QFT renormalizable models, as the ones studied in  \cite{L2},   \cite{L3}, \cite{LS}, \cite{Sch1}. For recent results regarding the self-adjointness domain and the resolvent of Spin-Boson Hamiltonians with ultraviolet divergences we refer to \cite{LL} and \cite{Lo}.
\end{remark}

\end{section}
\vskip90pt
{\bf Declarations.}
The author has no relevant financial or non-financial interests to disclose.
The author has no competing interests to declare that are relevant to the content of this article.
Data sharing not applicable to this article as no data sets were generated or analyzed during the current study.

\vskip50pt


\begin{thebibliography}{99}


\bibitem{Arai} A. Arai: {\it Analysis on Fock spaces and mathematical theory of quantum fields: An introduction to mathematical analysis of quantum fields.} 
 World Scientific,  Singapore 2018.
 
\bibitem{CFP} C. Cacciapuoti, D. Fermi, A. Posilicano: On inverses of Kre\u\i n's $\mathscr Q$-functions, {\it Rend. Mat. Appl.} {\bf 39} (2018), 229-240.

\bibitem{GW2} M. Griesemer, A. W\"unsch: On the domain of the Nelson Hamiltonian. {\it J. Math. Phys.} {\bf 59} (2018), 042111, 21 pp. 

\bibitem{K} T. Kato: {\it Perturbation Theory for Linear Operators.} Springer 1976.

\bibitem{KP} S. G. Kre\u\i n, Yu. I. Petunin: Scales of Banach Spaces. {\it Russ. Math. Surv.} {\bf 21} (1966), 85-159.

\bibitem{L2} J. Lampart: The Renormalized Bogoliubov-Fr\"ohlich Hamiltonian. {\it  J. Math. Phys.} {\bf 61} (2020), 101902, 12 pp. 

\bibitem{L3} J. Lampart: Hamiltonians for polaron models with subcritical ultraviolet singularities. 
{\it Ann. Henri Poincar\'e} {\bf 24} (2023), 2687-2728.

\bibitem{LS} J. Lampart, J. Schmidt: On Nelson-type Hamiltonians and abstract boundary conditions. {\it Comm. Math. Phys. }{\bf 367} (2019), 629-663. 

\bibitem{LL} S. Lill, D. Lonigro: Self-adjointness and domain of generalized spin-boson models with mild ultraviolet divergences.
{\tt arXiv:2307.14727 [math-ph]}, 2023

\bibitem{Lo} D. Lonigro: Self-Adjointness of a Class of Multi-Spin-Boson Models with Ultraviolet Divergences. {\it Math. Phys. Anal. Geom.} {\bf 26}, 15 (2023)

\bibitem{N0} E. Nelson: Schr\"odinger particles interacting with a quantized scalar field,  in W.T. Martin, I. Segal (eds.) {\it Analysis in Function Space}, 87-120, MIT Press, Cambridge, MA, 1964. 

\bibitem{N} E. Nelson: Interaction of nonrelativistic particles with a quantized scalar field. 
{\it J. Math. Phys.} {\bf 5} (1964), 1190-1197.

\bibitem {JFA} A. Posilicano: A Kre\u{\i}n-like formula for singular
perturbations of self-adjoint operators and applications. \emph{J. Funct.
Anal.}, \textbf{183} (2001), 109-147.

\bibitem {P03}A. Posilicano: Self-adjoint extensions by additive
perturbations. \emph{Ann. Sc. Norm. Super. Pisa Cl. Sci.}(V) \textbf{2} (2003), 1-20.

\bibitem {O&M} A. Posilicano: Self-adjoint extensions of restrictions,
\emph{Oper. Matrices}, \textbf{2} (2008), 483-506.

\bibitem {MPAG} A. Posilicano: On the Self-Adjointness of H+A${\,}^{\!\!*}$+A. {\it Math. Phys. Anal. Geom.} {\bf 23} (2020), 37. 

\bibitem{Sch1} J. Schmidt: On a direct description of pseudorelativistic Nelson Hamiltonians. {\it 
J. Math. Phys.} {\bf 60} (2019), 102303, 21 pp. 

\bibitem{Sch2} J. Schmidt: The Massless Nelson Hamiltonian and its Domain. In: A. Michelangeli (ed.) {\it Mathematical challenges of zero-range physics - models, methods, rigorous results, open problems.} Springer INdAM Series, 42. Springer, Cham, 2021. 

\bibitem{St} M. H. Stone: {\it Linear transformations in Hilbert space.} 
American Mathematical Society. New York, 1932.

\end{thebibliography}
\end{document}